\newcommand{\gopt}{G_{\!opt}}
\newcommand{\argmin}{\operatornamewithlimits{arg\,min\,}}
\newcommand{\pNe}{pairwise Nash equilibrium}
\newcommand{\Ne}{Nash equilibrium}
\newcommand{\ps}{pairwise stability}
\newcommand{\config}[1]{(G^{(#1)},\Gamma^{(#1)},\mathbf{P}^{(#1)})}
\newcommand{\dconfig}{(G,\Gamma,\mathbf{P})}
\newcommand{\rxtc}{\textsc{RX3C}}
\newcommand{\XTC}{\textsc{EXACT\ 3-COVER}}
\newcommand{\xtc}{\textsc{X3C}}
\newcommand{\LSC}{\textsc{LOWER\ SC}}
\newcommand{\lsc}{\textsc{LSC}}
\newcommand{\LSCE}{\textsc{LOWER\ SC\ EQUILIBRIUM}}
\newcommand{\lsce}{\textsc{LSCE}}
\newcommand{\is}{\textsc{IS}}
\newcommand{\BR}{\textsc{BEST\ RESPONSE}}
\newcommand{\br}{\textsc{BR}}
\begin{document}

\title{Network Formation: Heterogeneous Traffic, Bilateral Contracting and Myopic Dynamics}
\titlerunning{Heterogeneous Traffic, Bilateral Contracting and Myopic Dynamics}

\author{Carme \`Alvarez\thanks{Partially supported  MEC TIN-2007-66523 (FORMALISM).} and Aleix Fern\`andez}

\institute{\email{alvarez@lsi.upc.edu}  \, \, \email{aleixfdz@gmail.com} \\
ALBCOM Research Group. Universitat Polit\`ecnica de Catalunya.
}

\maketitle

\begin{abstract}
We study a network formation game where nodes wish to send traffic to other nodes. Nodes can contract bilaterally other nodes to form bidirectional links as well as  nodes can  break unilaterally contracts to eliminate the corresponding links.  Our model is an extension  of the model considered in~\cite{AJM:07,AJM:09}. The novelty  is that we do no require the traffic to  be  uniform all-to-all. Each node specifies the amount of traffic that it wants to  send  to any other node.
We characterize  stable topologies under a static point of view and we also study the game under a myopic dynamics.  We show its convergence to stable networks under some natural assumptions on the contracting functions. 
Finally we consider the efficiency of pairwise Nash topologies from a social point of view and we show that  the problem   of deciding the existence stable topologies of a given price is $\NP$-complete.

\end{abstract}

\section{Introduction}

Nowadays  social and economic networks, and even communication  networks are typically endogenous and  operate at scale that makes unpractical the use of centralized policies. In all these networks, nodes can be seen as  autonomous agents that wish to communicate to each other.  Each  agent may  choose whom to accept connections from and whom to connect to. Moreover  nodes can also decide  to break a set of non profitable  connections.    Network Formation Games provide a natural model with which to study networks that are formed by ad hoc and decentralized dynamics. 
A simple  exemple of network formation game is the model known as \emph{Local Connection Game} defined in~\cite{Fe:03}.

There are several studies related to network formation  games.  A seminal work is the paper~\cite{M:76} in which  graphs describe cooperative structures  and represent coalitions among the players. Subsequently in~\cite{AM:88} the authors consider dynamics for network formation (see~\cite{N:05} for a survey on  cooperative games).  In~\cite{Ae:04} the authors study a network design game known by~\emph{Fair Connection Game}. A weighted extension of this model is  considered in~\cite{CR:06}. Models in which nodes can create links unilaterally are known by unilateral connection games see e.g.~\cite{BG:00a,Se:03,HS:03,F:05,He:07}.  In the bilateral connection games  links are created only when both end nodes accept. Some examples of this type of games can be found in~\cite{JW:95,JG:00,FK:02} (See~\cite{J:03} for a survey).

 In this paper we study an extension of the network formation model with  bilateral contracting  defined in~\cite{AJM:07,AJM:09}.  Instead of considering only  an all-to-all communication pattern, we extend the model to a non uniform traffic pattern. 
We are interested in understanding  the behavior of the networks  when the agents interact to choose their connections. In particular we study  both the role of  the bilateral contracts and the role of  the heterogeneous traffic,  in the dynamic process of shaping a network.
  Following standard game theory we refer the decision makers as players.  The main elements of the game are the following:\\
 1. Nodes are players;\\
 2. Links represent bilateral agreements between their end nodes;\\
 3. Each node can  deviate its  strategy by breaking contracts with other nodes (unilateral deviation) or by agreeing to the creation    of a  link with another node (bilateral deviation);\\
4. Given a network topology the pay offs of nodes  depend on the cost of  participating in the network as well as on the payments between the end nodes of a link.

Our study  focuses on the stability of the networks,  not only from a static point of view, but also from a dynamic point of view.
Since in  our model players can deviate their strategies bilaterally we consider the natural notion of  \emph{pairwise Nash equilibrium } also used in~\cite{AJM:07,AJM:08,AJM:09}. 
Our first contribution is a general characterization of pairwise Nash stable topologies. We show that  these stable graphs are forests that  depend on the traffic matrix. If the traffic matrix corresponds to a uniform topology like all-to-all, then we obtain an equivalent characterization to the one shown in~\cite{AJM:07}.

In order to study how a network would evolve under the interactions between players, we define  a  discrete myopic dynamics.  We prove that  the computation of the Best Response   is $\NP$-hard and then, in order to avoid this computational hardness,  we define a  set of actions that can be taken by a node in one  round in such a way  that the computation of the best one can be done efficiently. We also show the convergence the dynamics to pairwise Nash stable configurations for any given traffic matrix under some natural assumptions on the contracting functions. 
Furthermore if the  traffic matrix correspond to trees or to complete graphs, then dynamics converges after a polynomial   number of rounds in expectation.

Our third contribution is related with the efficiency of the pairwise Nash topologies from both social point of view and computational point of view. For some particular traffic patterns we are able to characterize their optimal topologies an then we can discuss about the efficiency of their equilibria (how an equilibrium could be close to  the optimum). But from a computational point of view we show that  problem of deciding the existence  of a pairwise Nash equilibrium of a given social cost  is $\NP$-complete.

The rest of the paper is organized as follows. In section 2 we describe the game model. In section 3 we show a characterization of pairwise Nash topologies. In section 4 we define a natural myopic dynamics
and we prove its convergence to pairwise Nash topologies for any given traffic pattern. In section 5 we study the efficiency of equilibria from the social point of view as well as the computational complexity of deciding the existence of efficient equilibria. The proofs of many results are in online appendices.

\section{The Formation Game}

We consider a Network Formation Game where the players are nodes of a network that wish to connect with other nodes. This game is an extension of the model defined in~\cite{AJM:07,AJM:09} to heterogeneous traffic matrix. Our game models a scenario where  for each pair $i,j$ of nodes, 
 $i$ wants to send some amount of traffic  $t_{ij}\geq 0$ to  $j$  instead of considering the uniform all-to-all  traffic pattern studied in~\cite{AJM:07}.  
Nodes contract with other nodes bilaterally  to form bidirectional communication links.  Each contract is the result of a bilateral agreement, one node seeks the agreement and the other node accepts it, and then there is a transfer of utility from the seeking node to the accepting one.
Moreover each node has  a cost which depends on the resulting topology. The payment that each node has to make takes into account the link maintenance, the routing costs  and a disconnection cost. Given a network topology, nodes' payoffs are obtained from payments between nodes participating in the same link minus their  cost. 

We use the notation $G=(V,E)$ to represent a graph (undirected or directed). We denote by $ij$ the edge or undirected link between the nodes $i$ and $j$ and we denote by $(i,j)$ the arc or directed link from node $i$ to node $j$. By an abuse of notation we use the shorthand $ij \in G$ (or $(i,j) \in G$) instead of using the name $E$ of the set of edges $ij \in E$ (or arcs $(i,j) \in E$). Following the same style, we use $G+ij$ (or $G+(i,j)$) to represent the new graph resulting from adding $ij$ (or $(i,j)$ to the set $E$) and we use $G-ij$ (or $G-(i,j)$) to represent the new graph resulting from removing $ij$ (or $(i,j)$ to the set $E$). We denote by $C_u$ the connected component of $G$ that contains the node $u$. We refer to the number of nodes by $n$.

Formally,  the resulting network topology of our network formation game is an undirected graph $G=(V,E)$ where each node of $V$ represents a player and each edge $ij$ represents an undirected link between $i$ and $j$. Let $t_{ij} \geq 0$ the number of packets that $i$ wants to send to $j$. We represent this traffic pattern by a \textsl{traffic matrix} $T = (t_{ij})_{i,j \in V}$. Given a traffic matrix $T$ we define its associate undirected graph as $G_T = (V,\{ij : t_{ij} + t_{ji} > 0\})$. We assume that given network topology, traffic is routed along shortest paths. In case of multiples shortest paths of equal length, traffic is split equally among all available paths.

Each node $i \in V$ has a cost
$$C(i;G) = \pi\,\delta(i;G) + c_i\,f(i;G)+D(i;G)$$
where $\pi$ is the cost per link and $\delta(i;G)$ denotes the degree of $i$ in $G$, $c_i$ is the routing cost per unit of traffic and $f(i;G)$ is the total traffic that transits though $i$ and $D(i;G)$ is the \textsl{disconnection cost}. Note that the first term corresponds to the \textsl{link maintenance cost} and the second corresponds to the \textsl{routing cost}.

Let $Participants(i) = \{j : t_{ij}+ t_{ji} > 0\}$.
We assume that  the disconnection cost $D$ has to satisfy the following two  natural assumptions:\\
\emph{\textbf{Assumption 1} (About disconnection)}: $D(i;G)=0$ if $i$ is connected to all its participants. Otherwise, if $j\in Participants(i)$, and  and $C_i\not = C_j$, then
$D(i;G + ik)=D(i;G + il)$ when $k,l \in C_j$. The disconnection cost only depends on whether $i$ is connected to $j$ and not on the selected link $il$.\\
\emph{\textbf{Assumption 2} (About connection)}: Given a player $i$ and two configurations $(G,\Gamma,\mathbf{P})$ and $(G',\Gamma,\mathbf{P})$ such that $Participants(i) \cap C_i \subset Participants(i) \cap C'_i$ then $U(i;G,\mathbf{P}) < U(i;G',\mathbf{P}')$.

Let $p_{ij}$ denote a payment from $i$ to $j$. We assume that if a link $ij$ does not exists or if $i = j$, then $p_{ij} = 0$. We refer to $\mathbf{P} = (p_{ij})_{i,j \in V}$ as the \textsl{payment matrix}. This payment matrix will be specified later (it depends on a contracting function which at the same time depends on the resulting topology).

Then, the total utility of a node $i$ in graph $G$ with payment matrix $\mathbf{P}$ is defined by
\[ U(i;G,\mathbf{P}) = \sum_{j \neq i} (p_{ji} - p_{ij}) - C(i;G) \]

In order to formalize the agreements between the players we consider for each node $i$ two sets: $F_i \subseteq V \setminus \{i\}$ is the set of nodes that $i$ is willing to accept connections from and $T_i \subseteq V \setminus \{i\}$ is the set of nodes it wants to connect to. Let $\mathbf{T} = (T_i)_{i \in V}$ and $\mathbf{F} = (F_i)_{i \in V}$ be the strategy vectors. 

 Given strategy vectors $\mathbf{T}$ and $\mathbf{F}$, let $\Gamma = \Gamma(\mathbf{T},\mathbf{F})$ a directed graph representing the \textsl{contracting graph} which captures the direction of the link formation and let $G=G(\mathbf{T},\mathbf{F})$ the resulting topology.  A contract $(i,j) \in \Gamma$  if and only if $i \in F_j$ and $j \in T_i$.  An undirected link $ij \in G$ if and only if $(i,j)\in  \Gamma$ or $(j,i)\in \Gamma$

We assume that there is \textsl{contracting function} $Q(i,j;G)$ that specifies the transfer of benefit from $i$ to $j$ in when the topology is $G$. (If $Q(i,j;G) < 0$, then the transfer if from $j$ to $i$). Given strategy vectors $\mathbf{T}$ and $\mathbf{F}$, the \textsl{payment matrix} $\mathbf{P}=(p_{ij})_{i,j \in V}$ is defined as
\[
p_{ij} = \left\{ \begin{array}{l l}
  Q(i,j;G) & \mbox{if } (i,j) \in \Gamma \\
  0 & \mbox{otherwise}
\end{array} \right.
\]

We say that $\dconfig$ is a \textsl{feasible configuration} if there exists $(\mathbf{T},\mathbf{F})$ such that $G = G(\mathbf{T},\mathbf{F})$, $\Gamma = \Gamma(\mathbf{T},\mathbf{F})$ and $\mathbf{P} = \mathbf{P}(\mathbf{T},\mathbf{F})$. In this case we say that $(\mathbf{T},\mathbf{F})$ \textsl{generates} configuration $\dconfig$. Hence, given strategy vectors $(\mathbf{T},\mathbf{F})$, the utility of node $i$ is $U(i;G(\mathbf{T},\mathbf{F}),\mathbf{P}(\mathbf{T},\mathbf{F}))$. By an abuse of notation we will often use the shorthand $G$, $\Gamma$ and $\mathbf{P}$ to refer to specific instantiations of network topology $G(\mathbf{T},\mathbf{F})$, contracting graph $\Gamma(\mathbf{T},\mathbf{F})$, and payment matrix $\mathbf{P}(\mathbf{T},\mathbf{F})$, respectively.

The contracting functions can be interpreted as the outcome of a negotiation process which depends on the network topology. Instead of focusing on particular contracting functions we are interested in contracting functions having two natural properties: \emph{anti-symmetry} and \emph{affinity}.\\
\emph{\textbf{Property 1} (Anti-symmetry)}: $Q$ is \emph{anti-symmetric} if
$Q(i,j;G) = -Q(j,i;G)$. (The payment for a link  does not depend on which node asks for the connection).\\
\emph{\textbf{Property 2} (Affinity)}: $Q$ is affine if for each $i,j ,k$ such that $j \in Participant(i)$ and $k \not\in Participant(i)$ then $|Q(i,j;G + ij)| > |Q(i,k;G+ik)|$.


\section{Pairwise Nash Stability}

Our study  focuses on stability notion, so we need to use an appropriate equilibrium concept. In our model players may deviate their strategies unilaterally or bilaterally. In particular, a node $i$ may  choose to not create a link $ij$, not including  $j$ in $T_i$ or $F_i$. Even a node $i$ may break a link $ij$ by removing $j$ from $T_i$ or $F_i$. But node $i$ can not  create a new link $ij$ unilaterally, it depends on that $j$ accepts the contract from $i$. Formally, the link $ij$ is created if and only if 
$i\in T_j$ and $j\in F_i$,  or $j\in T_i$ and $i\in F_j$. If only node $i$ is asked to include $j$ in $T_i$ and it is the case that $F_j$ does not contain $i$,  then $i$ the link $ij$ can not be created.
Hence, in an stable situation no node $i$ has incentive to \emph{break} unilaterally any contract and no pair of nodes have incentive of creating bilaterally a link. This notion corresponds to the \emph{pairwise Nash stability} also used in~\cite{AJM:07}.
A \pNe{} can be defined as the conjunction of the classical \emph{ \Ne{}} and \emph{\ps{}} first introduced by~\cite{JW:95}. 

In order to define formally  the concept of \pNe{}, we first need to know how a deviation of the strategy vectors modifies the outcome configuration.

Given a strategy vector $(\mathbf{T}, \mathbf{F})$, let $(\mathbf{T}', \mathbf{F}')$ be the resulting strategy vector after applying a deviation on $(\mathbf{T}, \mathbf{F})$. Then the resulting configuration is defined as follows: $G' = G(\mathbf{T}', \mathbf{F}')$, $\Gamma' = \Gamma(\mathbf{T}', \mathbf{F}')$ and $\mathbf{P}' = (p'_{ij})_{i,j \in V}$ where
\[
p'_{kl} = \left\{
 \begin{array}{l l l}
  p_{kl}    & \mbox{if $(k,l) \in \Gamma'$ and $(k,l) \in \Gamma$} \\
  Q(k,l;G') & \mbox{if $(k,l) \in \Gamma'$ and $(k,l) \notin \Gamma$}\\
  0         & \text{otherwise}
 \end{array} \right.
\]

Let us remind the classical concepts of Nash equilibrium and pairwise stability.
A strategy vector $(\mathbf{T}, \mathbf{F})$ is a \emph{Nash equilibrium} if for every node $i$, any unilateral deviation  $T_i'$ and $F_i'$ does not increment its utility,  
$U(i;G, \mathbf{P}) \geq U(i;G', \mathbf{P}')$.
A configuration $\dconfig$ is \emph{pairwise stable} if and only if
for every  $uv \in G$, $U(u;G,\mathbf{P}) \geq U(u; G-uv,\mathbf{P}')$ (no node has incentive to delete a link), and\
for every $uv \not \in G$, if $U(u;G+uv,\mathbf{P'}) > U(u;G,\mathbf{P})$ then $U(v;G+uv,\mathbf{P'}) <U(v;G+uv,\mathbf{P}')$ (no pair  of nodes have both incentive to add a new link between them).\\

Now we have all the ingredients to define the appropriate notion of stability.

\begin{definition} A strategy vector $(\mathbf{T}, \mathbf{F})$  is  a \emph{pairwise Nash equilibrium} if $(\mathbf{T}, \mathbf{F})$ is a \emph{Nash equilibrium} and its  outcome  $\dconfig$ is  \emph{pairwise stable}. A configuration $\dconfig$ is \emph{pairwise Nash stable} if there exists a pairwise Nash equilibrium $(\mathbf{T}, \mathbf{F})$  such that $G= G(\mathbf{T}, \mathbf{F})$, $\Gamma= \Gamma(\mathbf{T}, \mathbf{F})$ and $\mathbf{P}= P(\mathbf{T}, \mathbf{F})$.
\end{definition}

In~\cite{AJM:07} and~\cite{AJM:09}, the  \pNe{} concept  was defined in a different way, but it is not hard to see that both definitions are equivalent.

In other words, in our game, \pNe{} means that no node $i$ has incentive to delete a set of contracts, and no pair of nodes $i,j$ agree in creating a new  link $ij$ ( i.e. if the utility of $i$ is incremented then the utility of $j$ is decremented). 

We are interested in characterizing the topologies that, independently of the other parameters, lead to a \pNe{} configuration. We say that a configuration $\dconfig$ is undirected if $(u,v) \in \Gamma$ then $(v,u) \notin \Gamma$. Notice that in  a no undirected configuration a contract may be broken without changing the topology. 

\begin{definition} \label{pnet}
A graph $G$ is a \pNe{} topology if and only if every feasible and undirected configuration $\dconfig$ is a \pNe{}.
\end{definition}

It is not hard to see that there exists \pNe{}  configurations $\dconfig$ where $G$ is not a topology \pNe{}. Note that by definition, the stability of a topology is independent of the contracting function.
In the following theorem we characterize pairwise Nash  stable topologies.

\begin{theorem} \label{tpnet}
Given a traffic matrix $T$, a graph $G$ is a \pNe{} topology if and only if $G$ satisfies the following properties:\\
1. For every pair $u,v \in G$ such that $u \in Participants(v)$ then $C_u = C_v$.\\
2. $G$ is a forest.\\
3. For every edge $uv \in G$ then, let be $G' = G - uv$, $C'_u \cap Participants(v) \neq \emptyset$ and $C'_v \cap Participants(u) \neq \emptyset$.
\end{theorem}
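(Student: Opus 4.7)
The plan is to prove the ``iff'' by splitting it into necessity (every \pNe{} topology satisfies conditions 1--3) and sufficiency (every graph satisfying 1--3 is stable under all feasible undirected $\dconfig$). Both halves hinge on pairing Assumptions 1 and 2 with the antisymmetry and affinity properties of the contracting function $Q$.

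For necessity I will argue the contrapositive of each condition. If condition 1 fails there exist $u,v$ with $v\in Participants(u)$ but $C_u\neq C_v$; in any feasible undirected configuration, bilaterally adding $uv$ strictly enlarges $Participants(u)\cap C_u$ and $Participants(v)\cap C_v$, so Assumption 2 strictly raises both utilities and breaks pairwise stability. If condition 2 fails I pick an edge $uv$ on a cycle: removing it preserves every component, so by Assumption 1 no disconnection cost changes; antisymmetry of $Q$ lets me orient $\Gamma$ so that one endpoint, say $u$, is the paying node, and then the unilateral deletion of $uv$ saves $u$ the maintenance fee $\pi$ together with the payment $p_{uv}\ge 0$, contradicting Nash once the routing change is shown not to reverse the sign. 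Condition 3 is analogous: if $C'_v\cap Participants(u)=\emptyset$, no participant of $u$ is lost when $uv$ is dropped, so Assumption 1 again freezes the disconnection cost and the same orientation trick produces a profitable unilateral deletion for $u$.

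For sufficiency I fix a feasible undirected $\dconfig$ and verify the two equilibrium conditions. Against edge deletion, condition 3 forces any removal of $uv\in G$ to strand a participant of $u$ in $C'_v$ and a participant of $v$ in $C'_u$, so $Participants(u)\cap C_u$ and $Participants(v)\cap C_v$ both strictly shrink; Assumption 2, read in the contrapositive ``fewer connected participants strictly lowers utility'', then overwhelms the $\pi$ and payment savings for either endpoint. A unilateral Nash deviation is a subset of incident edges, and the forest structure from condition 2 means every such subset removal disconnects some participant, so iterating Assumption 2 rules out multi-deletions too. Against edge addition, take $uv\notin G$: condition 1 forbids $u\in Participants(v)$ across components, so either $u\notin Participants(v)$, in which case the affinity of $Q$ makes the new payment across $uv$ strictly dominated by payments for genuine participants and the extra maintenance cost $\pi$ damages at least one of $u,v$, or $u,v$ already share a component and the added $uv$ creates a cycle that brings in no new participants.

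The hard part will be controlling the routing-cost changes in the necessity arguments for conditions 2 and 3: deleting a link reroutes transit traffic and can move $c_i f(i;G)$ in either direction, so I will need a careful shortest-path analysis together with the maintenance-plus-payment savings to guarantee that at least one endpoint (chosen via antisymmetry of $Q$) is strictly better off. Once that technical point is in hand the rest is bookkeeping combining conditions 1--3 with Assumptions 1 and 2 and the affinity of $Q$.
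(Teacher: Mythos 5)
Your proposal matches the paper's proof essentially step for step: necessity is argued by the contrapositive of each condition using the freedom to choose the witnessing configuration (the paper picks a contracting function with $p_{uv}-p_{vu}>0$ where you instead orient $\Gamma$ and invoke antisymmetry, the same freedom, though you should insist on a strictly positive payment so the deviation is strictly profitable), and sufficiency combines condition 3 with Assumption 2 against deletions and condition 1 with the sign of the payment against additions. The routing-cost point you flag as the hard part is exactly what the paper asserts without proof --- deleting an edge incident to $u$ only removes shortest paths that pass through $u$, so $f(u;\cdot)$ cannot increase --- and your invocation of affinity in the edge-addition case is unnecessary (no new participant is gained, so maintenance plus a nonnegative payment already settles it), but neither point affects correctness.
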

\begin{proof}
Let $G$ be a \pNe{} topology:\\
(i) Let us suppose that there exists a pair  $u,v$ such that $v \in Participants(u)$ and $C_u \neq C_v$. Then $u$ and $v$ can deviate bilaterally by adding the link $uv$ to increment their utilities.\\
(ii) If there exists a link  $uv \in G$  contained in a cycle then for  some contracting function we have that $p_{uv} - p_{vu} > 0$. Note that if $u$  eliminates $uv$ (breaking the contract $(u,v)$ or $(v,u)$) then $u$ still remains connected to its participants, the maintenance cost decreases by $\pi$, and the total traffic  through $u$ in $G-uv$ is not greater than that the one through $u$ in $G$.  Therefore  node $u$ can  increment its utility applying such unilateral deviation.\\
(iii) Let us suppose that  there are a node $u$ and link $uv$ such that $u$ remains connected to its participants in $G-uv$. Hence for a contracting function  such that $p_{uv} - p_{vu} > 0$, $u$ can increase its utility by  eliminating  the link $uv$.

Let $(G,\Gamma,\mathbf{P})$ be a feasible undirected configuration where $G$ satisfies  1,2 and 3. No node $u$ wants to break unilaterally a set of contracts $C \neq \emptyset$. If $u$ breaks a contract then by property (iii) $u$ becomes   disconnected  from some of its participants and then this unilateral deviation is not profitable.

Neither a pair  $u,v$ wants to deviate bilaterally.  Notice that if a link $uv$ is added, then for a contracting function such that $p_{uv} - p_{vu} > 0$, the  utility of $u$ is decreased strictly (the maintenance and routing costs of $u$ at least remain equal).
\qed
\end{proof}

Given  the above characterization it is not hard to see  that if a topology $G$ is a spanning forest of $G_T$ then $G$ is a \pNe{} topology: $G$ is acyclic, all nodes are connected to its participants and for each link $uv$, $u \in Participants(v)$, stronger statement than condition 3.

\begin{corollary} \label{cor1}
Let $T$ be a traffic matrix. Then every spanning forest of $G_T$ is a \pNe{} topology.
\end{corollary}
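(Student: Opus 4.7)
The plan is to verify directly the three conditions in Theorem~\ref{tpnet} that characterize pairwise Nash stable topologies. Let $G$ be a spanning forest of $G_T$. Condition 2 (that $G$ is a forest) is immediate from the definition of a spanning forest, so nothing needs to be done there.

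For condition 1, I would invoke the standard fact that a spanning forest of a graph has exactly the same connected components as the graph itself. Then if $u,v\in V$ with $u\in Participants(v)$, the definition of $G_T$ gives $t_{uv}+t_{vu}>0$, so $uv\in G_T$, placing $u$ and $v$ in the same component of $G_T$ and therefore in the same component of the spanning forest $G$; hence $C_u=C_v$.

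For condition 3, I would exploit the observation already highlighted in the prose preceding the corollary: any edge $uv\in G$ lies in $G_T$, so $t_{uv}+t_{vu}>0$, whence $u\in Participants(v)$ and $v\in Participants(u)$. Writing $G'=G-uv$, the vertex $u$ trivially belongs to its own component $C'_u$, giving $u\in C'_u\cap Participants(v)$, and symmetrically $v\in C'_v\cap Participants(u)$; both intersections are non-empty, so condition 3 holds. Applying Theorem~\ref{tpnet} then yields the corollary.

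The main (and only) subtle point in this plan is the claim used in condition 1 that a spanning forest preserves the connected components of the underlying graph. This is a standard property of spanning forests rather than a genuine obstacle, but it is the one place where the argument uses more than the plain definitions, so I would make sure to invoke it explicitly. Everything else reduces to unpacking what ``spanning'', ``forest'', and ``$Participants$'' mean.
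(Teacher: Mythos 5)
Your proposal is correct and follows essentially the same route as the paper: both verify the three conditions of Theorem~\ref{tpnet}, using that a spanning forest of $G_T$ preserves its connected components (condition 1) and that every edge $uv$ of $G$ lies in $G_T$, so $u\in Participants(v)$ and $v\in Participants(u)$, which makes condition 3 immediate since $u\in C'_u$ and $v\in C'_v$. Your write-up is, if anything, slightly more explicit than the paper's about the component-preservation fact.
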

Notice  that there exists  \pNe{} topologies that are not spanning trees of $G_T$, figure~\ref{fig:ps} is an example. 
If we consider a uniform traffic then we obtain the following result which corresponds to the  characterization presented in~\cite{AJM:07} for the all-to-all case.

\begin{corollary} \label{cor2}
Let $T$ be a traffic matrix so that $G_T$ is a complete graph. Assuming that $\pi>0$, for any contracting function $Q$, a feasible undirected configuration $\dconfig$ is a \pNe{} iff $G$ is a tree.
\end{corollary}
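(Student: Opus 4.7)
The plan is to deduce the corollary from Theorem~\ref{tpnet} together with Corollary~\ref{cor1}, exploiting the fact that $G_T$ complete means $Participants(i) = V \setminus \{i\}$ for every $i$. Under this simplification, condition~1 of Theorem~\ref{tpnet} forces $G$ to be connected, condition~2 forces $G$ to be acyclic, and condition~3 is automatic: removing any edge of a tree splits it into two nonempty components, each of which meets $V\setminus\{u,v\} = Participants(v) \cup Participants(u)$ trivially. So in this setting a pairwise Nash topology is exactly a spanning tree.

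For the ``if'' direction, suppose $G$ is a tree on $V$. Then $G$ is a spanning tree of $G_T$, hence a spanning forest, so by Corollary~\ref{cor1} $G$ is a pairwise Nash topology; in particular every feasible undirected configuration $\dconfig$ is a \pNe{}, for any anti-symmetric $Q$ (note that affinity is vacuous when $G_T$ is complete, since there are no non-participants). For the ``only if'' direction, suppose $\dconfig$ is a feasible undirected \pNe{} and, for contradiction, $G$ is not a tree; then $G$ is either disconnected or contains a cycle. If $G$ is disconnected, pick $u,v$ in distinct components: they are mutual participants, and by Assumption~2 each strictly gains utility from adding $uv$ (both acquire a strictly larger set of connected participants), contradicting pairwise stability. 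If $G$ contains a cycle, pick an edge $uv$ on it; removing $uv$ keeps every node connected to all its participants, the routing costs of $u$ and $v$ cannot increase, and the link-maintenance cost drops by $\pi$.

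The delicate point is handling the payment on the cycle edge for an arbitrary $Q$. Since the configuration is undirected, exactly one of $(u,v),(v,u)$ lies in $\Gamma$; say $(u,v) \in \Gamma$, so $p_{uv} = Q(u,v;G)$ and $p_{vu}=0$. Either endpoint may break this contract unilaterally (by removing the other from the appropriate $T_i$ or $F_i$). The resulting utility change is $\pi + Q(u,v;G) + (\text{routing saving})$ for $u$ and $\pi - Q(u,v;G) + (\text{routing saving})$ for $v$. Their sum is at least $2\pi > 0$, so at least one of the two is strictly positive, providing a profitable unilateral deviation and violating Nash equilibrium. This is the main obstacle I anticipate: a priori, an adversarial value of $Q(u,v;G)$ could seem to deter both endpoints from breaking; the anti-symmetry of $Q$ combined with $\pi>0$ is precisely what rules this out and makes the biconditional hold for \emph{every} contracting function $Q$.
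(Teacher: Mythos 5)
Your proof is correct and follows essentially the same route as the paper: the ``if'' direction via Corollary~\ref{cor1}, and the ``only if'' direction by showing that a \pNe{} configuration must be connected (a bilateral deviation between mutual participants in distinct components, via Assumption~2) and acyclic, hence a spanning tree. Your sum-of-utility-changes argument for a cycle edge is a worthwhile elaboration of the detail the paper's one-line proof leaves implicit (the proof of Theorem~\ref{tpnet} picks a convenient $Q$, which is not available here since $Q$ is fixed); note only that the cancellation you obtain comes from the fact that the single contract on an undirected edge is a zero-sum transfer between its endpoints, not from the anti-symmetry of $Q$.
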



\vspace*{-10pt}
\begin{figure}
\begin{center}
\subfloat{
  \begin{tikzpicture}
  \node at (0,0)  (a)  {$a$};
  \node at (0,2)  (b)  {$b$};
  \node at (2,0)  (c)  {$c$};
  \node at (2,2)  (d)  {$d$};
  \node at (-1,1)      {$G_T$};
  \draw (a) -- (c) ;
  \draw (a) -- (d) ;
  \draw (b) -- (c) ;
  \draw (b) -- (d) ;
  \end{tikzpicture}
  }
 \qquad\qquad
\subfloat{
  \begin{tikzpicture}
  \node at (0,0) (a)  {$a$};
  \node at (0,2) (b)  {$b$};
  \node at (2,0) (c)  {$c$};
  \node at (2,2) (d)  {$d$};
  \node at (3,1)      {$G$};
  \draw (a) -- (b) ;
  \draw (a) -- (c) ;
  \draw (b) -- (d) ;
  \end{tikzpicture}
}
\end{center}
\vspace*{-12pt}
\caption{$G$ \pNe{} no spanning tree of $G_T$}
\label{fig:ps}
\end{figure}


\vspace*{-10pt}

\section{Myopic Dynamics}

One of our main purposes is to define a myopic dynamics for our network formation game that converges to pairwise Nash equilibrium configurations. We are interested in  discrete myopic dynamics where  at each round players change  their strategies to maximize their current utility. 
Our dynamics  should allow to study the game as a process where players interact  over time through deviations. Furthermore, the fact of being 
myopic allows us to represent realistic situations where players update their strategic decisions having only a limited knowledge of the world. Instead of considering a long-term objective, at each round players take actions that can be seen as an efficient algorithm to optimize their current payoff immediately. 
This requirement precludes the use of the best response dynamics since in our network formation game to  find the deviation that maximize the utility of a node is computationally hard.

Formally, we define best response problem in the classic way as follows,

\noindent \BR{}: Given a tuple $\langle Q, T, \pi, (c_i)_{i \in G} \rangle$ that defines a game setting, a configuration  $(G, \Gamma, \mathbf{P})$, a node $u$,  and a value $C$
\emph{decide} whether there is  a  deviation for $u$ such that after applying it, $u$'s utility is greater or equal than $C$.


\begin{theorem} \label{br}
The \BR{} problem is $\NP$-complete.
\end{theorem}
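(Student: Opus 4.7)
The plan is to reduce from \xtc{}; membership in $\NP$ is immediate, since a deviation of $u$ is specified by the two polynomial-size sets $T'_u, F'_u \subseteq V\setminus\{u\}$, and given them the resulting topology, contracting graph, payment matrix and utility of $u$ are all computable in polynomial time.

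For the reduction, start from an instance $X=\{x_1,\dots,x_{3q}\}$, $\mathcal{S}=\{s_1,\dots,s_m\}$ of \xtc{} and build a game setting with a distinguished node $u$, a node $x_i$ for every element of $X$, and a gadget node $v_j$ for every subset $s_j$. In the initial configuration $\dconfig$ each $v_j$ is linked to the three elements of $s_j$, and the strategies of the other players are arranged so that (i) each $v_j$ has $u\in T_{v_j}$ (so $u$ creates the link $uv_j$ just by inserting $v_j$ into $F_u$), and (ii) no $x_i$ is willing to form a direct link with $u$. Fix the traffic matrix by $t_{ux_i}=1$ for every $i$ and zero elsewhere, so that $Participants(u)=\{x_1,\dots,x_{3q}\}$ and no traffic between other nodes loads the network. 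Finally, take $\pi>0$, pick $c_u$ arbitrarily, set $Q\equiv 0$, and use the disconnection cost $D(u;G)=M\cdot|\{x_i\in Participants(u): C_{x_i}\neq C_u\}|$ with $M>\pi m$; this $D$ is compatible with Assumptions~1 and~2.

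Under this setting, any deviation of $u$ reduces (up to dominated choices) to selecting a subset $S'\subseteq\{v_1,\dots,v_m\}$ and setting $F_u=S'$. All payment terms vanish since $Q\equiv 0$, and the routing load through $u$ is $f(u;G')=0$ because $u$ is always an endpoint of its own traffic and no other traffic exists. Hence $U(u;G',\mathbf{P}')=-\pi|S'|-D(u;G')$. Taking the threshold $C=-\pi q$, we see that $U(u)\ge C$ forces both $D=0$ (so $S'$ covers $X$) and $|S'|\le q$; since covering $3q$ elements with 3-element subsets needs at least $q$ of them, this happens iff $\mathcal{S}$ admits an exact $3$-cover. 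This establishes the reduction and hence $\NP$-hardness.

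The main obstacle is purely technical: checking that $u$ has no cheaper alternative deviations that the reduction has not accounted for. Three invariants do the work: (a) since no $x_i$ is willing to accept a direct contract with $u$, every path from $u$ to a participant must go through some $v_j$; (b) $u$'s degree, routing load and payments depend only on $|S'|$, not on which $v_j$'s are chosen; and (c) the choice $M>\pi m$ guarantees that any configuration in which at least one participant is unreached has utility strictly below $-\pi q$, while a configuration with an exact cover achieves exactly $-\pi q$. Once these are verified, the equivalence between $U(u)\ge C$ and the existence of an exact 3-cover is immediate.
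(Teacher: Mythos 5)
There is a genuine gap, and it is fatal to the reduction as written. Your hardness argument encodes ``covering'' via the disconnection cost: you want $u$ to be connected to $x_i$ only when some chosen $v_j$ with $x_i\in s_j$ is linked to $u$. But disconnection, by Assumption~1, is a property of \emph{connected components}, and in your initial configuration every $v_j$ is already linked to the three elements of $s_j$. Since distinct sets of an \xtc{} instance share elements, these gadget edges chain the $v_j$'s and $x_i$'s together: whenever the incidence structure of $\langle X,\mathcal{S}\rangle$ is connected (the generic case, and \xtc{} stays hard on such instances), all of $\{v_1,\dots,v_m\}\cup\{x_1,\dots,x_{3q}\}$ is a \emph{single} connected component of $G\setminus\{u\}$. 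Then $u$ reaches every participant by adding one single link to any $v_j$, obtaining utility $-\pi\ge -\pi q$, so your instance answers YES regardless of whether an exact cover exists. Concretely, take $X=\{a,\dots,f\}$ and $\mathcal{S}=\{\{a,b,c\},\{c,d,e\},\{c,e,f\},\{a,d,f\}\}$: there is no exact $3$-cover, yet linking $u$ to $v_1$ alone connects $u$ to all six elements through the shared-element paths. In general the minimum number of links $u$ needs equals the number of components of the gadget graph meeting $Participants(u)$, which has nothing to do with exact covers. (A secondary, fixable issue: the \br{} input tuple is $\langle Q,T,\pi,(c_i)\rangle$ and does not include $D$, so you cannot freely ``use the disconnection cost $D=M\cdot|\cdots|$''; but since Assumptions~1--2 already force connection to dominate, that part could be rephrased. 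The component-collapse problem cannot.)

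The paper avoids this trap by driving the hardness through a different mechanism: it reduces from \is{}, places $u$ at the center of a star over $V(G)$ with all contracts already formed and paying $u$ a fixed amount $2>\pi$ per link, and puts traffic between every pair of $G$-adjacent leaves so that it transits through $u$. The only deviations available to $u$ are contract deletions, and the trade-off is payment-minus-maintenance income ($+1$ per retained leaf) against transit routing cost (incurred exactly when two retained leaves are adjacent in $G$), so the optimal retained set is a maximum independent set. If you want to keep a cover-type reduction, you would have to prevent $u$ from reaching elements ``transitively'' through shared elements --- e.g.\ by encoding the choice in routing or payment costs rather than in connectivity --- which essentially pushes you back toward the paper's construction.
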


This $\NP$-completeness result lead us to reconsider the definition of each round  of the dynamics. The main idea behind our dynamics definition is to restrict the best response  so that   the players' decisions can be computed in polynomial time and so that the dynamics converges to a pairwise Nash equilibrium.

Our dynamics is based in rounds. For every round $k$, let $\config{k}$  the configuration at the beginning of the round and $u_k$ the activated node by an \textsl{activation process}. An activation process is any discrete time stochastic process $\{U_k\}_{k \in \mathbb{N}}$ where $U_k$ are i.i.d. random variables from $V$ drawn with full support. A realization of an activation process is called an \textsl{activation sequence}

For the sake of simplicity we assume that $\pi > 0$ and we assume that any initial configuration $\config{0}$ is  a feasible undirected configuration.

At each  round  $k$ the activated node  $u_k$ can take one of the following actions:\\
1) to break a contract $(u_k,v)$ or $(u_k,v)$, therefore $\Gamma^{(k+1)} = \Gamma^{(k)} - (u_k,v)$ or $\Gamma^{(k+1)} = \Gamma^{(k)} - (v,u_k)$, respectively, $G^{(k+1)} =G^{(k)}-u_kv$ and $\mathbf{P}^{(k+1)} = \mathbf{P}^{(k')}$,\\
2) to update the payment of a contract $(u_k,v)$ or $(v,u_k)$, and then  $p^{(k+1)}_{u_kv} = Q(u_k,v;G^{(k)} + u_kv)$, $\Gamma^{(k+1)} = \Gamma^{(k)}$ and $G^{(k+1)} =G^k$,\\
3) to ask for a new contract $(u_k,v)$, and if it was accepted by both  $u_k$ and $v$, then  $\Gamma^{(k+1)} = \Gamma^{(k)} + (u_k,v)$,  $G^{(k+1)} =G^k+u_kv$  and  $\mathbf{P}^{(k+1)} = \mathbf{P}^{(k)'}$,\\
4) to do nothing, $\config{k+1}=\config{k}$.\\

>From all these possible actions, $u_k$ selects the one that maximizes its utility. If  $u_k$ has multiple optimal choices, we assume that one of them is selected randomly.

Notice that  by the definition of the myopic dynamics we have that any accessible configuration $\config{k}$ is an undirected configuration and satisfies that if  $(u,v)\not \in \Gamma^{(k)}$ then  $p^{(k)}_{uv}=0$. 

\begin{definition}[\emph{Convergence}]
Given any initial configuration $ \config{0}$  and an instance of the activation process, we say that the dynamics converges if there exists $K$ such that
$\config{k+1} = \config{k}$ for $ k > K$.\\
We say that, the dynamics converges uniformly if for every $\epsilon > 0$ there exists $K$ such that
$Pr \left[ \config{k+1} = \config{k}, \forall k > K \right] \geq 1-\epsilon$
where the probability is taken w.r.t. the activation process.
\end{definition}


We say that a configuration is a \emph{sink configuration} if whatever node is activated, it will choose to do nothing. 
Since we  are assuming that $\pi > 0$, then a sink configuration has to be an acyclic topology. Moreover, it is not hard to see that in any sink configuration no node can break a set of contracts (not only one contract) improving in this way its benefit. Neither,a pair of nodes can increment their benefit by adding a new link between them. Then we have the following stability property of the sink configurations.


\begin{proposition} \label{pro:sinkpne}
If $\config{k}$ is a sink configuration then $\config{k}$  is a pairwise Nash equilibrium.
\end{proposition}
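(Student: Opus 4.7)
The plan is to extract both halves of the \pNe{} notion---pairwise stability of $\config{k}$ and Nash equilibrium of any strategy vector generating it---from the fact that, when activated, every node prefers action 4 over each of actions 1, 2, 3. The four dynamics actions match up essentially one-to-one with the elementary deviations considered by the \pNe{} definition; the only gap is the multi-contract breaks that an unrestricted Nash deviation can exhibit, and these I will handle via a forest decomposition.

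Pairwise stability is the easy half. The no-deletion clause asks that no single $u$ improves by dropping an existing link $uv$; if it did, action 1 on the corresponding contract would strictly dominate action 4 for $u$, contradicting the sink hypothesis. For the no-addition clause I argue the contrapositive: if $uv\notin G$ and both $u$ and $v$ would strictly gain from $G+uv$ under the payments prescribed by $Q$, then when $u$ is activated, action 3 (ask $v$ for a new contract) is accepted by $v$ (who strictly gains) and strictly improves $u$'s utility, again contradicting the sink.

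For the Nash clause I need to bound unilateral deviations $(T'_u,F'_u)$ of a single node $u$, possibly affecting several contracts at once. A preliminary observation is that the dynamics preserves ``clean'' strategies: an element enters $T_u$ or $F_u$ only when the matching contract is created by action 3, and leaves only when it is broken by action 1, so $v\in T_u\cup F_u$ iff the corresponding directed contract already sits in $\Gamma$. Hence $u$ cannot forge any new link unilaterally and a Nash deviation of $u$ merely erases some subset $S$ of the contracts incident to $u$. Since $\pi>0$, the sink configuration is acyclic (as asserted in the paragraph preceding the proposition), so the edges $uv_1,\dots,uv_{|S|}$ of $S$ descend into pairwise vertex-disjoint subtrees $T_{v_1},\dots,T_{v_{|S|}}$ of the component of $u$. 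I would then decompose the utility change of $u$ caused by erasing $S$ into a sum of the utility changes caused by erasing each $uv_i$ in isolation: the maintenance term contributes $\pi$ per edge; the payment term sums the independent per-contract contributions $p_{u v_i}-p_{v_i u}$; the routing term collects the traffic carried over each $uv_i$, which by the tree structure involves pairs with at least one endpoint in $T_{v_i}$; and the disconnection term decomposes by Assumption 1 together with the disjointness of the sets $Participants(u)\cap T_{v_i}$. Each summand is non-positive by the sink property applied to action 1 on the contract for $uv_i$, so the total is non-positive and no unilateral deviation of $u$ is profitable.

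The main obstacle is the additive decomposition of the disconnection term across the subtrees. To justify it I would first invoke condition 1 of Theorem~\ref{tpnet}---which the sink satisfies because otherwise a pair $u\in Participants(v)$ with $C_u\neq C_v$ could run action 3 profitably---to ensure that in $G$ all participants of $u$ are already reachable from $u$. I would then use Assumption 1, which makes $D(u;\cdot)$ depend only on the set of participants of $u$ still reachable, together with the pairwise disjointness of $Participants(u)\cap T_{v_i}$, to telescope $D(u;G-S)-D(u;G)$ into the per-edge disconnection increments that appear in the one-edge sink inequalities, closing the argument.
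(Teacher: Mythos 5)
Your treatment of pairwise stability and of single-contract breaks is sound and matches the paper's points 2--3. The genuinely different part is how you handle a unilateral deviation that breaks several contracts at once: you propose an exact additive decomposition of $\Delta U(u)$ over the disjoint subtrees $T_{v_1},\dots,T_{v_{|S|}}$, whereas the paper runs an exchange argument on a \emph{minimal} profitable break-set $A$: it shows that the marginal gain of breaking $uv$ on top of $A'=A-\{uv\}$ is at most the gain of breaking $uv$ alone (since $f(uv;G-A')\le f(uv;G)$), so $A'$ is still profitable, contradicting minimality. Your maintenance and payment terms do split exactly, and your routing term splits sub-additively in the right direction (a packet crossing both $uv_i$ and $uv_j$ is counted twice in the sum of single-edge savings, so the total routing saving is at most the sum of the individual ones) --- that part is fine even though it is an inequality rather than the clean disjoint decomposition you describe.

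The gap is the step you yourself flag as the main obstacle: telescoping $D(u;G-S)-D(u;G)$ into per-edge increments. Assumption~1 does not give you this. It only says $D=0$ under full connection and that the cost of reconnecting to a component is independent of the chosen endpoint; it says nothing about $D$ being additive or superadditive over unions of newly disconnected participant sets, so the inequality $D(u;G-S)-D(u;G)\ \ge\ \sum_i\bigl(D(u;G-uv_i)-D(u;G)\bigr)$ that your argument needs is unsupported. The way out (and what the paper implicitly does when it asserts $C'_v\cap Participants(u)=\emptyset$ for every edge of a profitable break-set) is Assumption~2: reinstating any single edge whose removal disconnects $u$ from a participant strictly increases $u$'s utility irrespective of payments, so a profitable deviation never disconnects $u$ from any participant and $D$ stays identically $0$ throughout; after that reduction your decomposition of the remaining three terms closes the argument. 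Two smaller omissions: you never invoke the ``no node wants to update a contract'' observation, which is what makes the sink configuration feasible (payments equal to $Q$) and hence eligible to be a \pNe{} configuration at all; and you silently assume $v\in T_u\cup F_u$ only when a contract exists, which is true under the dynamics but should be stated since the Nash deviation is defined on strategy vectors, not on $\Gamma$.
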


The following theorem  establishes the convergence of our dynamics when the contracting function is antisymmetric and affine. From now on let us suppose that $Q$ is an antisymmetric and affine contracting function.

\begin{theorem}
Let $\config{0}$ be any initial configuration. For any activation process the dynamics converges uniformly. Further, if the activation process is uniform then the expected number of rounds is $O(e^n)$.
\end{theorem}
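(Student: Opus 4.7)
The plan is to exploit that the reachable configurations form a finite Markov chain whose absorbing states are the sink configurations, and to show that from every non-sink configuration there is a positive-probability sequence of at most $\ell(n)$ rounds leading to a sink. Combined with Proposition~\ref{pro:sinkpne}, this gives uniform convergence to a \pNe{}, and a standard geometric-trials argument over disjoint blocks of length $\ell(n)$ yields the expected-time bound. Finiteness of the state space follows by induction on $k$: both $G^{(k)}$ and $\Gamma^{(k)}$ lie in a finite collection of graphs on $n$ nodes, and every nonzero payment $p^{(k)}_{uv}$ equals $Q(u,v;G')$ for some previously visited $G'$ (payments are set only by actions 2 and 3, each using $Q$ on the current topology), so the payment matrix also ranges over a finite set.

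The heart of the proof is the construction, from any non-sink state $\config{k}$, of an \emph{escape sequence} of specific activations and greedy actions reaching a sink, organized in two alternating phases. Phase~1 (cycle removal): while $G^{(k)}$ contains a cycle, activate a node $u$ on that cycle; deleting $u$'s contract along the cycle edge $uv$ leaves $u$ connected to all its participants (so $D(u;\cdot)$ is unchanged by Assumption~1), decreases the maintenance cost by $\pi>0$, and does not increase the transit traffic through $u$, while by anti-symmetry of $Q$ the single payment transfer on $uv$ has a definite sign, so at least one of $u,v$ strictly gains by breaking its side of the contract. Phase~2 (participant reconnection): while there exist $u,v$ with $v\in Participants(u)$ and $C_u\ne C_v$, affinity of $Q$ together with Assumption~2 imply that bilateral addition of $uv$ strictly increases the utility of both $u$ and $v$ and dominates every alternative single-link action for either endpoint, so once $u$ is activated the link is created. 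Any cycle newly created is destroyed by returning to Phase~1. The lexicographic potential (number of $(u,v)$ pairs violating condition~1 of Theorem~\ref{tpnet}, then number of edges) strictly decreases across phase transitions, so the alternation terminates after $O(n^2)$ activations at a configuration satisfying properties 1--3 of Theorem~\ref{tpnet}; a final sweep of action-2 updates then yields a true sink in which no node gains by breaking, updating, adding, or doing nothing.

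Each step of the escape sequence requires activating a specific node (probability $1/n$ under the uniform process) and the selected greedy action being chosen among a bounded set of ties, so the probability of following the entire escape from any state is at least $p_0=(c/n)^{\ell(n)}>0$. By disjoint-block geometric trials, $\Pr[\,\text{non-sink at step }k\ell(n)\,]\le (1-p_0)^k\to 0$, which is exactly uniform convergence. The stated $O(e^n)$ bound follows from a tighter escape length $\ell=O(n)$ (the final forest has at most $n-1$ edges and Phase~1 is invoked $O(n)$ times overall against Phase~2, each invocation costing $O(1)$ rounds in the uniform-activation analysis) and $O(1)$ greedy choices per round, giving $p_0=2^{-O(n)}$ and expected rounds $\le \ell/p_0 = O(e^n)$. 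The main obstacle is Phase~2: ruling out that an activated node might prefer some alternative deviation over adding the missing participant link in a different component; this is precisely where both the affinity of $Q$ (to dominate contracts with non-participants) and Assumption~2 (to turn the strictly larger participant-overlap into a strictly larger utility) are essential, and the argument breaks without either.
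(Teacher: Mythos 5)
There is a genuine gap, and it sits exactly where you locate ``the main obstacle.'' Your escape-sequence argument requires that, in Phase~2, an activated node $u$ with a disconnected participant $v$ will choose to propose the link $uv$ as its myopic best action. But affinity only bounds $|Q(i,j;G+ij)|$ against $|Q(i,k;G+ik)|$; the \emph{sign} of the transfer matters. If $Q(u,v;G+uv)>0$ then $u$ pays \emph{more} to contract with its participant $v$ than with a non-participant $k$ in some component, and $u$ may strictly prefer the non-participant link (or some other deviation). The paper's Lemma~\ref{lem:dynamic:equalone} handles this only in the case $A_E=1$, and only by anti-symmetry selecting the \emph{one} endpoint $u$ for which $Q(u,v;G+uv)\le 0$, i.e.\ the endpoint that receives the transfer. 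For $A_E>1$ no such claim is made or true: the paper's analysis explicitly allows the activated node to add a link to a non-participant, which is why its potential tracks the \emph{sum} $C_E+A_E$ (adding a non-participant link decreases $A_E$ but increases $C_E$, keeping the sum non-increasing by Lemma~\ref{lem:dynamics:less}), rather than insisting on a prescribed participant link at each step. Relatedly, your lexicographic potential (violating pairs, then edge count) is not shown to be monotone under the actual best responses: breaking a contract can increase the number of violating pairs, which is precisely the case the paper's $C_E$-versus-$A_E$ trade-off is designed to absorb.

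The quantitative claim also fails as written. A designated-node escape sequence of length $\ell(n)=\Theta(n)$ under uniform activation succeeds with probability $(c/n)^{\Theta(n)} = n^{-\Theta(n)} = 2^{-\Theta(n\log n)}$, not $2^{-O(n)}$, so your geometric-trials bound yields expected time $n^{\Theta(n)}$, not $O(e^n)$. The paper obtains $O(e^n)$ because when $A_E=m$ there are at least $m$ interchangeable nodes whose activation makes progress, so one descent from $A_E=m$ to $A_E=1$ succeeds with probability at least $\prod_{j} j/n = A_E!/n^{A_E}$, and $n^n/n! = O(e^n)$ by Stirling; a single prescribed witness per round cannot recover this factor of $n!$. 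Your overall architecture (finite chain, positive-probability escape, geometric trials) is sound and is indeed how uniform convergence is usually packaged, but to make it work here you would still need the paper's monotone degree $(C_F, C_E+A_E, A_P)$ and its $A_E=1$ lemma to certify that the chain cannot cycle among non-sink states, and a multiplicity count of good activations to reach $O(e^n)$.
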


The main idea of the  proof is  based on defining a degree of configuration an then show that this degree decreases as the dynamics evolves. The so called degree  is defined by a tuple of four  no negative integer values. These values give us information about the relation between the current topology $G^{(k)}$ and the given  traffic matrix $T$. In fact we  prove that at each round,  some of these values never increase and if the current configuration is not a pairwise Nash equilibrium,  then these values decrease strictly  with a probability greater than zero. Finally,  since the configurations with minimum degree are stable, then we can conclude that our dynamics converges. Let us introduce formally  the key element of the proof and all the properties needed to prove the dynamics convergence.


\begin{definition}[\emph{Degree of a configuration}]
We associate to each configuration $\dconfig$ a degree denoted by  $\delta \dconfig$  and defined by 
$ \delta \dconfig =(C_F,C_E, A_E,A_P)$  where:
\begin{itemize}
\item[$C_F$] is the number of edges that have to be removed to have a spanning forest of $G$.
\item[$C_E$] is  the maximum number of edges, such that their contracts are agreed between two nodes no participants, in all spanning forests of $G$. $C_E = \max_{B \in ST(G)} |\{uv : uv \in E(B) \land u \notin Participants(v)\}|$, where $ST(G)$ is the set of all spanning forests of $G$.
\item[$A_E$] is the minimum number of edges to be  added in order to get a graph in which  each node is  connected to its  participants.
\item[$A_P$] is the number of no updated contracts, this is $|\{ (i,j) : (i,j) \in \Gamma, \ p_{ij} \neq Q(i,j;G)\}|$.
\end{itemize}

\end{definition}


Since any sink configuration $\dconfig$ is a pairwise Nash equilibrium, we have that is $\dconfig$ a feasible configuration, $G$ can not contain any cycle, and each node is connected to all its participants. 

\begin{lemma} \label{lem:sink}
If $\dconfig$ is sink configuration then we have $\delta \dconfig =(0,C_E,0,0)$. Furthermore, if $\delta \dconfig =(0,0,0,0)$ then $\dconfig $ is a sink configuration.
\end{lemma}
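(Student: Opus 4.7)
My plan is to prove the two implications separately, leveraging Proposition~\ref{pro:sinkpne} and Theorem~\ref{tpnet} in both. For the first implication, since a sink is a \pNe{} by Proposition~\ref{pro:sinkpne}, Theorem~\ref{tpnet} immediately forces $G$ to be a forest (so $C_F=0$) and every node to be connected to all its participants (so $A_E=0$). For $A_P=0$, I argue by contradiction: if some contract $(i,j)\in\Gamma$ had $p_{ij}\neq Q(i,j;G)$, then antisymmetry of $Q$ makes the two discrepancies $p_{ij}-Q(i,j;G)$ and $p_{ji}-Q(j,i;G)$ have opposite signs, so exactly one of $i,j$ would strictly improve its utility by invoking action~2 on this contract, contradicting the sink hypothesis. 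No further restriction on the coordinate $C_E$ comes out, since edges between nodes that are not mutual participants may still be needed to glue the forest together, so indeed $\delta(G,\Gamma,\mathbf{P})=(0,C_E,0,0)$.

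For the converse, assume $\delta(G,\Gamma,\mathbf{P})=(0,0,0,0)$. The three degree coordinates being zero translate directly into the three conditions of Theorem~\ref{tpnet}: $A_E=0$ gives condition~1; $C_F=0$ gives condition~2; and $C_E=0$ combined with the symmetry of the ``is a participant of'' relation gives condition~3, because every edge $uv\in G$ then joins mutual participants, so $u$ itself witnesses $C'_u\cap Participants(v)\neq\emptyset$ and symmetrically for $v$. Hence the feasible, undirected configuration is a \pNe{}. It remains to check that none of the four possible actions of an activated node $u_k$ can change the configuration with a strictly higher payoff: action~2 is vacuous because $A_P=0$ (the payment is already at its $Q$-value); action~3 is blocked by pairwise stability, since $v$ refuses any $u_k v$ that $u_k$ would benefit from proposing; action~1 removes an edge $u_k v$ of the forest, which by $C_E=0$ disconnects $u_k$ from the participant $v$, and Assumption~2 gives a strict utility drop from this loss of a participant in $u_k$'s component; action~4 trivially preserves the configuration. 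Thus doing nothing is strictly optimal among configuration-changing actions, and the configuration is a sink.

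The most delicate step is the strictness claim in the action~1 case, because Assumption~2 is formulated with the payment matrix held fixed, whereas breaking the contract also zeroes out $p_{u_k v}$ and saves $\pi$ in link-maintenance cost. I would handle this by factoring the deviation into two stages: first, holding $\mathbf{P}$ fixed, remove the edge and apply Assumption~2 to extract a strict loss for $u_k$; second, adjust for the payment change $p_{u_k v}\to 0$ and the saved $\pi$. The hypotheses that $Q$ is antisymmetric and affine, together with $A_P=0$ (so that the payment being zeroed is exactly $Q(u_k,v;G)$), provide the quantitative control needed to absorb those two corrections into the strict drop given by Assumption~2. Once this bookkeeping is settled the remaining cases reduce to direct applications of Theorem~\ref{tpnet} and the definitions.
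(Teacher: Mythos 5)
Your second implication is essentially the paper's argument (conditions $C_F=C_E=A_E=A_P=0$ give a feasible undirected configuration on a spanning forest of $G_T$ in which every edge is a cut edge between mutual participants, updating is vacuous, and a proposed new link is either bad for the proposer or refused by the other endpoint), and routing it through Theorem~\ref{tpnet} is a legitimate, slightly cleaner packaging of the \pNe{} part. But your first implication contains a genuine logical error: from ``a sink configuration is a \pNe{}'' (Proposition~\ref{pro:sinkpne}) you invoke Theorem~\ref{tpnet} to conclude that $G$ is a forest with every node connected to its participants. Theorem~\ref{tpnet} characterizes \pNe{} \emph{topologies}, i.e.\ graphs $G$ such that \emph{every} feasible undirected configuration on $G$ is a \pNe{}; the paper explicitly warns that there exist \pNe{} configurations $\dconfig$ whose $G$ is not a \pNe{} topology, so the implication ``this configuration is a \pNe{} $\Rightarrow$ $G$ satisfies conditions 1--3'' is not available. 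The facts you need ($C_F=0$ and $A_E=0$) are true for sinks, but they must be derived directly from the sink property: acyclicity because on a cycle edge $uv$ the endpoint with $p_{uv}-p_{vu}\ge 0$ strictly gains by breaking the contract (this is exactly point~2 in the proof of Proposition~\ref{pro:sinkpne}), and connectivity to participants because if $u$ has a participant in another component then, by antisymmetry and affinity of $Q$, one of the two endpoints strictly gains by proposing the link and the other accepts (the argument of Lemma~\ref{lem:dynamic:equalone}). Your $A_P=0$ argument is fine in substance, modulo the fact that in an undirected configuration only one of $(i,j),(j,i)$ lies in $\Gamma$, so there is a single discrepancy $p_{ij}-Q(i,j;G)$ whose sign determines which endpoint profits from updating.

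A secondary issue: in the converse direction, your ``two-stage'' treatment of action~1 is announced but not actually carried out, and as described it does not close. Assumption~2 gives a strict inequality with no quantitative margin, so there is nothing to ``absorb'' the payment correction $p_{u_kv}\to 0$ with; antisymmetry and affinity of $Q$ do not supply such a bound either. What makes the step work is the intended reading of Assumption~2, namely that losing a participant from one's component strictly decreases utility \emph{regardless of the payment matrices} (note the paper writes $U(i;G,\mathbf{P})<U(i;G',\mathbf{P}')$), which is also what the paper's own one-line treatment implicitly uses. You should either state that reading explicitly or drop the two-stage plan.
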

%

In order to simplify the notation, let $\delta \config{k}=(C_F^{(k)}, C_E^{(k)}, A_E^{(k)}, A_P^{(k)})$. 

\begin{lemma} \label{lem:dynamics:less}
$C_E^{(k+1)} + A_E^{(k+1)} \leq C_E^{(k)} + A_E^{(k)}$ and $C_F^{(k+1)} \leq C_F^{(k)}$.
\end{lemma}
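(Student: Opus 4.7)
The plan is to case-split on the four possible actions of the activated node $u_k$ in round $k$.  The two topologically inert actions---updating a payment (action~2) and doing nothing (action~4)---preserve the topology $G^{(k)}$, and hence preserve $C_F$, $C_E$, and $A_E$, so both inequalities hold with equality.  This reduces the work to the deletion (action~1) and addition (action~3) cases.

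For action~1, where $u_k$ breaks a contract and deletes an edge $u_k v$, the bound $C_F^{(k+1)} \le C_F^{(k)}$ is immediate, since deleting an edge cannot enlarge a minimum feedback-edge set of a graph.  For the sum $C_E + A_E$, I plan to lean on the rationality of $u_k$ together with Assumption~1: $u_k$ deletes $u_k v$ only if the move is strictly profitable, and a deletion that would disconnect $u_k$ from one of its participants would incur a disconnection-cost jump that dominates the $\pi$ saved on maintenance and any payment effect (which is antisymmetric).  From an optimal spanning forest of $G^{(k)}$ witnessing $C_E^{(k)}$, I would build a spanning forest of $G^{(k+1)}$ in which any newly separated participant pair contributing to $A_E$ is matched by a non-participant edge that leaves $C_E$.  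The key observation is that when $u_k v$ belongs to the chosen forest, the deletion can be rational only when $v \notin Participants(u_k)$ or $u_k$ remains connected to $v$ via another path, which forbids the worst-case blow-up of $A_E$.

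For action~3, where $u_k$ asks for and obtains a new contract $u_k v$, I will argue that rational, bilaterally accepted play cannot create a cycle.  A cycle means $u_k$ and $v$ are already in the same component, so Assumption~2 rules out any gain from new participant connections at either endpoint.  The payment between $u_k$ and $v$ cancels in the joint picture by antisymmetry (Property~1), leaving only the extra maintenance $\pi$ per endpoint plus a routing adjustment, so at least one endpoint strictly loses and refuses the contract.  Hence action~3 can only fire when $u_k v$ is a bridge, giving $C_F^{(k+1)} = C_F^{(k)}$.  Affinity (Property~2) together with Assumption~2 then forces $v$ to be a previously unreached participant of $u_k$, so $A_E$ strictly decreases, and the new forest edge lies between participants and thus does not raise $C_E$.

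The main obstacle I foresee is action~1.  Making the spanning-forest swap precise---so that every unit added to $A_E$ by the deletion is compensated by a unit dropped from $C_E$---requires a short but careful witness argument, combining Assumptions~1 and~2 to exclude pathological deletions and to route the ``credit'' from removed non-participant forest edges to the newly separated participant pairs.  Once cycle-creation is ruled out in action~3, the remaining bookkeeping is essentially mechanical.
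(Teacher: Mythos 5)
Your case split and your handling of actions 1, 2 and 4 follow the paper's proof closely. For deletion you correctly isolate the one delicate point: a rational break that raises $A_E$ must remove a bridge $u_kv$ with $v\notin Participants(u_k)$ (Assumption~2 forbids $u_k$ from disconnecting itself from one of its own participants), and since a non-participant bridge lies in every spanning forest, $C_E$ drops by exactly one while $A_E$ rises by at most one, so the sum is preserved. That is exactly the accounting the paper asserts in case~(a), and your planned witness argument fills in a step the paper leaves implicit. The cycle-exclusion argument for action~3 (payments cancel in the joint utility, each endpoint still pays $\pi>0$, so at least one refuses) is also the paper's.

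The one genuine flaw is your final claim for action~3: that affinity and Assumption~2 force $v$ to be a previously unreached \emph{participant} of $u_k$, so that the new edge cannot raise $C_E$. Affinity only compares the magnitudes $|Q(i,j;G+ij)|$, not their signs; if $Q(u_k,w;\cdot)>0$ for the disconnected participant $w$, the proposer may prefer the cheaper contract with a non-participant $v$ lying in $w$'s component (Assumption~1 makes the disconnection saving identical for any landing point in that component), and the accepted edge is then a non-participant edge that can enter the optimal spanning forest and increase $C_E$ by one. The paper's proof explicitly allows for this: it claims only that the new link joins $u_k$ or $v$ to a component containing one of its participants, so that $A_E$ decreases by one while $C_E$ increases by at most one, and the sum $C_E+A_E$ is still non-increasing. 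Your final inequality therefore survives, but the justification you give for the $C_E$ bound in the addition case is incorrect as stated and should be replaced by this weaker, compensated accounting.
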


\begin{proof}
Let us consider all the possible actions:\\
(a) If $u_k$ breaks a contract, then $C_F^{(k+1)} \leq C_F^{(k)}$ and  $C_E^{(k+1)} \leq C_E^{(k)}$. By breaking a contract it can occur that $A_E^{(k+1)}=A_E^{(k)}+1$, but in this case we have that $C_E^{(k+1)}=C_E^{(k)}-1$ and then $C_E^{(k+1)}+ A_E^{(k+1)} \leq C_E^{(k)} + A_E^{(k)}$.\\
(b) If $u_k$ updates one contract, then the topology remains the same.\\
(c) If $u_k$ asks whether $v$ accepts a contract $(u_k,v)$ w.l.o.g and $v$ accepts it. Then it can be the case that $C_{u_k}\not =C_v$. If  the  new link $u_kv$ connects  $u_k$ or $v$ to one of their (disconnected) participants  then $A_E^{(k+1)}=A_E^{(k)}-1$. If it is not the case, $u_k$ only has incentive to add $u_kv$ if  at least $- p_{u_kv} -\pi \geq 0$, and  $v$ has incentive to accept the contract only if at least   $p_{u_kv} - \pi \geq 0$. Since $\pi > 0$, it can not occur.  Finally, we can also  consider the case that  $C_{u_k}=C_v$, i.e. $u_kv$ adds a new cycle. But now we have the same conditions than in the previous case and then the  bilateral deviation does not occur.

Summarizing, $u_kv$ does not add a new cycle, $C_F^{(k+1)} = C_F^{(k)}$, $u_kv$  connects at least  two participants,  $A_E^{(k+1)} = A_E^{(k)}-1$, an since  only a new link is added, $C_E^{(k+1)} \leq C_E^{(k)}+1$.
\qed
\end{proof}

We can also guarantee that $C_E^{(k)} + A_E^{(k)}$ is decremented when $A_E^{(k)}=1$.

\begin{lemma}\label{lem:dynamic:equalone}
If $\delta \config{k}=(C_F, C_E, 1, A_P)$ then there exists a node $u$ such that its most profitable action is to create a new edge with one of its participants so that the new configuration is $(C_F, C_E, 0, A_P')$ where $A_P'\geq0$.
\end{lemma}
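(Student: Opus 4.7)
The plan is to exploit the tight structural consequence of $A_E=1$ and single out a node whose optimal deviation necessarily installs the missing participant edge. First, I would observe that $A_E=1$ forces the existence of two connected components $C,C'$ of $G^{(k)}$ such that every disconnected participant pair has one endpoint in $C$ and the other in $C'$; otherwise no single new edge could simultaneously reconnect every missing pair. At least one such pair exists; fix one, say $(u,v)$ with $u\in C$, $v\in C'$, and $v\in Participants(u)$, and take this $u$ as the witness.

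Second, I would argue that when $u$ is activated, proposing the contract $(u,v)$ is consummated and strictly profitable. By Assumption 2, the link $uv$ strictly grows $v$'s set of connected participants (it gains at least $u$ and possibly more from $C$), so $v$'s utility strictly increases and $v$ accepts. By the symmetric application of Assumption 2 to $u$, the action strictly increases $u$'s utility as well, even after paying $\pi$, any routing overhead, and the transfer $Q(u,v;G^{(k)}+uv)$.

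The main obstacle — and the technical heart of the argument — is to verify that no other action of $u$ dominates the proposal of $(u,v)$. I would case-analyze $u$'s alternatives: (i)~breaking a contract either disconnects $u$ from a participant already in $C_u$, which strictly hurts $u$ by Assumption 2, or removes a cycle edge, yielding a saving of at most $\pi$ plus routing, which is strictly less than the benefit of gaining the whole merged component's worth of participants (here the affinity property says the magnitude of the transfer on a participant edge strictly dominates any non-participant one); (ii)~updating an existing contract only refreshes its payment and cannot bring any new participant into $C_u$, so by Assumption 2 it is strictly worse than $(u,v)$; (iii)~proposing $(u,w)$ with $w\notin Participants(u)$ is strictly dominated by $(u,v)$ by affinity combined with antisymmetry; (iv)~proposing $(u,w)$ with $w\in Participants(u)\cap C_u$ connects no new participant while still costing $\pi$. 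Together these cases exhaust $u$'s options and single out $(u,v)$ as uniquely optimal.

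Finally I would verify that the resulting degree is $(C_F,C_E,0,A_P')$ with $A_P'\geq 0$. Since $uv$ joins two previously distinct components, no cycle is created and $C_F^{(k+1)}=C_F^{(k)}$. The new edge is a bridge between two participants, so every spanning forest of $G^{(k)}+uv$ is a spanning forest of $G^{(k)}$ augmented by $uv$; since $uv$ is itself a participant edge this preserves the count of non-participant edges in a maximum such forest, giving $C_E^{(k+1)}=C_E^{(k)}$. By the choice of $(u,v)$, the merger of $C$ and $C'$ resolves every disconnected participant pair, so $A_E^{(k+1)}=0$; the inequality $A_P^{(k+1)}\geq 0$ is trivial.
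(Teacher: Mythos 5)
There is a genuine gap in your step (iii), and it is precisely the point the paper's proof is built around. You fix a disconnected participant pair $(u,v)$ and ``take this $u$ as the witness'' without controlling the sign of $Q(u,v;G^{(k)}+uv)$. The affinity property only compares \emph{magnitudes}: $|Q(u,v;G+uv)| > |Q(u,w;G+uw)|$ for $v$ a participant and $w$ a non-participant. If $u$ happens to be the paying side, i.e.\ $Q(u,v;G^{(k)}+uv)>0$, then affinity says $u$ pays \emph{strictly more} for the participant link than for a link to some non-participant $w\in C_v$ --- while by Assumption~1 the disconnection benefit of joining $C_v$ is identical whichever entry point $u$ uses. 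So for that choice of $u$, the link $uw$ with $w\notin Participants(u)$ can strictly dominate $uv$, and your claim that $(u,v)$ is the most profitable action fails. The missing idea is to use the antisymmetry of $Q$ to select the witness: since $Q(u,v;G+uv)=-Q(v,u;G+uv)$, at least one endpoint of the pair has a non-positive transfer on the new link, and \emph{that} endpoint is the node for which affinity works in the right direction (it receives at least as much, and strictly more in magnitude, by connecting to its participant than to any non-participant in the same component). This endpoint selection is the entire content of the paper's short proof, and it is absent from yours; your invocation of ``affinity combined with antisymmetry'' in case (iii) does not actually perform it.

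The rest of your argument is fine and in places more explicit than the paper's: the observation that $A_E=1$ confines all disconnected participant pairs to a single pair of components, the acceptance of the contract by the other endpoint via Assumption~2, and the bookkeeping showing the new degree is $(C_F,C_E,0,A_P')$ (the new edge is a participant bridge, so $C_F$ and $C_E$ are unchanged and $A_E$ drops to $0$) are all correct and worth keeping. With the witness re-chosen as the endpoint receiving the transfer, your case analysis goes through.
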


\begin{proof}
Since $A_E=1$, there exist $u,v$ such that $u\in Participants(v)$ and $C_u\not = C_v$. Let us suppose that $Q(u,v;G+uv)\leq 0$. There always exists such $u$ because of the antisymmetric property of $Q$. If $u$ is the active node, by the affinity property of $Q$ the most convenient for $u$ is to create a contract with $v$ than any other $w\not \in Participants(u)$.
\qed
\end{proof}

We are now able to prove the convergence of our dynamics. 
In order to analyze the expected number of rounds  to converge in the worst case, we assume that the activation process is uniform, i.e. for all nodes $u$, $ Pr\left[U_k = u\right] = \frac{1}{n}$, where $n$ is the number of nodes.  The convergence proof  continues to hold even the activation process is not uniform.

\begin{proof}[of theorem 2]

Let   $\delta \config{k}=(C_F, C_E, A_E, A_P)$  and let $\delta \config{k+1}=(C_F', C_E', A_E', A_P')$.
At each round $k$, depending of the current value $A_E$ we have one of the following cases:\\
$\mathit{A_E > 1}$: With probability $p\geq  \frac{A_E}{n}$ one new link will be added and with probability $p \leq 1 - \frac{A_E}{n}$ a contract can be broken.\\
$\mathit{A_E = 1}$: With probability $p\geq \frac{1}{n}$, by lemma \ref{lem:dynamic:equalone} the activated node  adds a link with a participant. With $p\leq \frac{n-1}{n}$ it can be added a link between no participants or if $C_F+C_E>0$ even a contract can be broken.\\
$\mathit{A_E = 0}$: By lemma~\ref{lem:dynamics:less} no link will be added. If a contract is eliminated then it can be the case that the corresponding link  does not belong to a cycle (cut link) and either it connects two participants  ($C_E'=C_E-1, A_E'=1$), or not  ($C_E'=C_E-1, A_E'=0$ and then $C_E' +A_E'<C_E +A_E$). Or it can be the case that the corresponding link is in a cycle ($C_F'=C_F-1, A_E'=0$).

Now we are going to analyze the expected number of rounds to reach a sink configuration. First of all note that $C_E+A_E\leq n$ and $C_F\leq n-1$.

Given a round $k$ let's show that in expectation, after $O(\frac{n^n}{n!})$ rounds either the dynamics reaches to a sink configuration or it arrives at least one time to an $A_E = 1$ configuration:\\
- if $A_E > 1$ then with probability $p\geq \frac{A_E!}{n^{A_E}}$, after $A_E-1$ rounds it shall end up to a configuration with $A_E = 1$. Hence in an expected number of rounds $O(\frac{n^n}{n!})$ we arrive at least one time at a $A_E=1$ configuration.\\
- if $A_E = 0$ after $O(n^2 \, \ln n)$ rounds either the topology has changed or we have a sink. In $O(n^2 \, \ln n)$ rounds, in expectation, all nodes are activated $n$ times, thus if topology remains equal we arrive to a feasible configuration, because all contracts are updated. With an additional $O(n \, \ln n)$ rounds all nodes are activated once more, and if topology has not changed then the configuration is a sink, because all nodes have chosen to do nothing. If the topology change it is because a contract has been broken. In this case if the corresponding link is a cutting edge that routes traffic then $A'_E = 1$ configuration. Otherwise $A'_E = 0$ configuration and $C_F'+C_E'<C_F+C_E$. Hence after $O(n^4 \, \ln n)$ rounds either the dynamics reaches to $A'_E = 1$ or to a sink (it is not possible change more than $n^2$ times the topology without reaching to an $A_E = 1$ configuration).

If we arrive $O(n)$ times at a $A_E = 1$ configuration then, in expectation, a node adds a link with a participant. 
So given a round $k$, in expectation, after $O(n^2 \frac{n^n}{n!})$ rounds the dynamics achieve a $C_E + A_E = 0$ configuration or a sink configuration.

Once the current configuration has $C_E + A_E = 0$, by lemma~\ref{lem:dynamics:less} never we will arrive to an $A_E > 0$ configuration, so after $O(n^4 \, \ln n)$ rounds the dynamics converges to an $(0,0,0,0)$ configuration.

\qed
\end{proof}

We notice that the assumptions made  on the contracting  function are not very strong and it seems that since we do not have many restrictions the path leading to a stable configuration can be very long, as we have seen just above. In the following we focus on the study of the dynamics behavior on  particular cases. We show that the dynamics is so efficient in the case of all-to-all traffic pattern as well as in the case of tree traffic patterns. 

\begin{theorem} \label{thm:complete_convergence}
Let $T$ an  all-to-all traffic matrix.
For a uniform activation process, the dynamics converges uniformly in  $O(n^4 \ln n)$ rounds.
\end{theorem}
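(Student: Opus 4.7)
The plan is to specialize the degree-based argument of Theorem~2 by exploiting two structural features of the all-to-all case. First, since every pair of distinct nodes belongs to each other's participants set, every edge of any reachable topology has two participant endpoints, so $C_E^{(k)} \equiv 0$ throughout the dynamics and $\delta\config{k}$ collapses to $(C_F^{(k)}, 0, A_E^{(k)}, A_P^{(k)})$. Second, once the graph becomes connected it stays connected: by Assumption~2 no activated node ever breaks a bridge, since doing so disconnects it from a participant and strictly lowers its utility. Together these two observations eliminate the combinatorial blow-up responsible for the exponential bound of Theorem~2, and by Lemma~\ref{lem:dynamics:less} the combined measure $C_F^{(k)} + A_E^{(k)}$ is monotonically non-increasing in this setting.

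I would organise the analysis into three (possibly interleaved) phases. In Phase~A ($A_E>0$) any activated node $u_k$ has at least one disconnected participant; by anti-symmetry of $Q$ some such participant $v$ satisfies $Q(u_k,v;G+u_kv)\le 0$, making the proposal strictly profitable for $u_k$, while Assumption~2 makes it strictly profitable for $v$ as well, and affinity rules out non-participant alternatives. Thus each activation reduces $A_E$ with probability $\Omega(1/n)$, so the phase terminates in $O(n\ln n)$ rounds over at most $n-1$ component merges. In Phase~B ($A_E=0$, $C_F>0$) every cycle edge $uv$ has a payer (by anti-symmetry), and the payer strictly gains from breaking: the maintenance cost drops by $\pi>0$, the outgoing payment vanishes, connectivity is preserved since $uv$ is not a bridge, and shortest-path routing cost cannot increase after removing a non-bridge. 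The designated payer is activated with probability $1/n$, so a coupon-collector estimate over at most $O(n^2)$ cycle edges bounds Phase~B by $O(n^3\ln n)$ rounds.

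In Phase~C ($A_E=C_F=0$, $A_P>0$) the topology is a spanning tree and no structural move is profitable: bridges are protected by Assumption~2, and any new edge would strictly raise maintenance cost with the disconnection cost already zero. For every stale contract $(u,v)$, at least one of $u,v$ strictly benefits from invoking action~2, namely the endpoint for whom the fresh $Q(u,v;G)$ dominates the stale $p_{uv}$, so coupon collector over the $n-1$ tree contracts yields $O(n^2\ln n)$ rounds for a single clean-up pass. The claimed $O(n^4\ln n)$ total arises when we account for the interleaving of the three phases: each of the $O(n^2)$ structural changes in Phases~A and~B re-stales the $O(n)$ adjacent contracts, and between consecutive structural moves the dynamics may run a short inner Phase-C sub-pass of $O(n^2\ln n)$ rounds before the next best response selects a new cycle removal. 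Summing across $O(n^2)$ such structural moves gives $O(n^4\ln n)$. The main obstacle of the plan is precisely this interleaving: one must verify that a stale payment never tricks a node into undoing prior structural progress, e.g.\ by reinstating a removed cycle edge. This is handled by Lemma~\ref{lem:dynamics:less}, which already implies $C_F^{(k)}+A_E^{(k)}$ is non-increasing in the all-to-all case, so Phases A and B never regress and the nested coupon-collector accounting goes through.
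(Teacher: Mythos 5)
Your proof follows essentially the same route as the paper's: in the all-to-all case $C_E\equiv 0$, bridges are never broken by Assumption~2, and the run is analysed as connect, then prune cycles, then refresh contracts, with the $O(n^4\ln n)$ bound coming from $O(n^2)$ cycle deletions each preceded by an $O(n^2\ln n)$ contract-refresh epoch. The only wobble is the appeal to anti-symmetry in Phase~A --- it guarantees a non-positive $Q$ for one endpoint of a given pair, not that every activated node has such a participant --- but your simultaneous appeal to Assumption~2 already makes the connecting move profitable for both endpoints, so nothing is lost.
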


\begin{theorem}  \label{thm:tree_convergence}
Let $T$ traffic matrix such that $G_T$ is a tree .
For a uniform activation process, the dynamics converges uniformly in  $O(n^4 \ln n)$ rounds.
\end{theorem}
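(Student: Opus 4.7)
The plan is to bootstrap from the preceding general convergence theorem by exploiting the tree structure of $G_T$ to eliminate the exponential bottleneck. In the general analysis, reducing $A_E$ from a value larger than $1$ is the slow step, requiring $A_E$ specific activations in sequence and producing the $O(n^n/n!)$ factor; when $G_T$ is a tree we can replace this with a per-round $\Omega(1/n)$ guarantee analogous to Lemma~\ref{lem:dynamic:equalone}.

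The key ingredient is a tree-version of Lemma~\ref{lem:dynamic:equalone}: whenever $G_T$ is a tree and $A_E^{(k)}\ge 1$, there exists a node $u$ whose most profitable action in round $k+1$ is to create a new contract with one of its currently disconnected participants. The argument follows the template of Lemma~\ref{lem:dynamic:equalone}'s proof. Since $A_E^{(k)}\ge 1$, some edge $uv\in G_T$ has $C_u\neq C_v$ in $G^{(k)}$; by antisymmetry I pick the endpoint with $Q(u,v;G^{(k)}+uv)\le 0$, so that $u$ is the beneficiary of the transfer. By affinity, the payment gain from adding $uv$ strictly exceeds the payment gain from adding any non-participant edge $uw$. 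By Assumption~2, the enlargement of $u$'s set of connected participants makes ``add $uv$'' strictly better than doing nothing or updating a contract. Breaking an existing contract of $u$ either drops one of $u$'s connected participants (forbidden by Assumption~2) or concerns a non-participant edge whose payment-saving is smaller, by affinity, than the payment-plus-disconnection gain from adding $uv$. Hence no alternative action can strictly dominate ``add $uv$'' for this $u$.

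With the strengthened lemma, $\Pr[A_E^{(k+1)}<A_E^{(k)}]\ge 1/n$ whenever $A_E^{(k)}\ge 1$; since by Lemma~\ref{lem:dynamics:less} the sum $C_E+A_E$ is weakly monotone and bounded by $O(n)$ initially, $A_E$ reaches $0$ in $O(n^2)$ expected rounds. Once $A_E^{(k)}=0$, I invoke directly the $A_E=0$ phase of the proof of the general convergence theorem: in $O(n^2\ln n)$ rounds either every contract has been refreshed and the configuration is a sink, or a contract is broken, decreasing $C_F+C_E$ by one. Since $C_F+C_E=O(n^2)$ initially and no break can reintroduce $A_E>0$ (Lemma~\ref{lem:dynamics:less}), chaining at most $O(n^2)$ topology changes with $O(n^2\ln n)$ stabilization periods gives $O(n^4\ln n)$ additional rounds. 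Adding the $O(n^2)$ from the first phase yields the claimed bound.

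The main obstacle is the strengthened lemma. Its subtle step is ruling out that an expensive existing contract of $u$ is more attractive to break than adding a disconnected-participant edge; the sharpness of the affinity comparison (each $G_T$-edge uniquely witnesses a participant relation when $G_T$ is a tree) together with the strictly positive Assumption~2 gain obtained when $u$'s set of connected participants grows combine to make ``add $uv$'' the dominant action for the chosen $u$.
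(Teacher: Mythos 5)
Your reduction to a ``strengthened Lemma~\ref{lem:dynamic:equalone}'' correctly identifies where the work has to happen, but your proof of that lemma never uses the hypothesis that $G_T$ is a tree: it is a replay of the $A_E=1$ argument applied to an arbitrarily chosen disconnected pair $u,v$ with $Q(u,v;G+uv)\le 0$. If that argument were sound for all $A_E\ge 1$ it would yield polynomial convergence for every traffic matrix, which is exactly what the general convergence theorem does not provide. The step that fails is the claim that this particular $u$ most prefers to contract with a disconnected participant. When $A_E>1$, $u$ may be separated from participants lying in several distinct components; it may prefer to join a component $C'$ other than $C_v$ (one containing more of its participants), and inside $C'$ Assumption~1 makes the disconnection gain identical for every target, so if the transfer $Q(u,v';G+uv')$ to the participant $v'\in C'$ is unfavourable to $u$, affinity works against you: $|Q(u,v';G+uv')|>|Q(u,w;G+uw)|$ makes the participant link the \emph{more} expensive one, and $u$'s best response can be to contract with a non-participant $w\in C'$. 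That move leaves $C_E+A_E$ unchanged ($A_E$ drops by one but $C_E$ rises by one), so your potential makes no progress and the $O(n^2)$ bound for your first phase collapses.

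The paper closes exactly this hole with an argument you are missing, and it is the only place the tree hypothesis enters. Let $U$ be the set of nodes not connected to all their participants; for each $u_i\in U$ let $v_i$ be the target of $u_i$'s best link-adding action and pick $p_i\in Participants(u_i)\cap C_{v_i}$. The map $u_i\mapsto p_i$ is a functional digraph on $U$ whose edges are all edges of $G_T$, so it contains a cycle, and any cycle of length at least $3$ would produce a cycle in $G_T$. Hence there is a mutual pair $u_i,u_j$, each a participant of the other and each wanting to join the other's component; for such a pair antisymmetry fixes an endpoint for which the transfer to that participant is favourable, and only then do affinity and Assumption~1 force the participant to be the chosen target. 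This is what yields the per-round probability at least $1/n$ of a genuine decrease of $C_E+A_E$. Your second phase (the $A_E=0$ analysis, chaining $O(n^2)$ topology changes with $O(n^2\ln n)$ refresh periods via Lemma~\ref{lem:dynamics:less}) does match the paper, but the theorem does not follow until the first-phase lemma is proved with the functional-digraph argument or an equivalent use of the acyclicity of $G_T$.
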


\section{Efficiency and Complexity}

We are interested in quantifying how bad are the effects of the competitive and selfish behavior of the players not only from the computational point of view, but also from the qualitative point of view.
Since we consider any traffic pattern, in general  we can show that there exist stable configurations  that are no sufficiently good from the social perspective. If we fix the traffic pattern, then we obtain  some better results.

The social cost $SC$ of a graph $G$  is defined  by $SC(G) =  \sum_{u \in G} C(u;G)$.
  Therefore the social cost depends only of the topology. Notice that contracts induce zero-sum transfers among nodes and do not affect global efficiency. Then considering the cost social as the sum of the costs or as the sum of the utilities is the same (excepting for the sign). Therefore the social cost depends only of the topology.

The optimal topology $\gopt$  is the most efficient topology from the social prospective. Formally, $ \gopt = \argmin_{G} SC(G)$.
The price of a topology $G$ measures how far $G$ is from the optimum and is defined by  $P(G) = \frac{SC(G)}{SC(\gopt)}$.
Finally, the \textsl{price of stability} ($PoS$) and the \textsl{price of anarchy} ($PoA$) are defined as the minimum and maximum price of a \pNe{} topology, respectively.

For some particular traffic patterns we are able to characterize their optimal topologies an then we can discuss about the efficiency of their equilibria.
If the pattern is a tree, that is  $G_T$ is a tree then there is only one \pNe{} topology, that corresponds to $G_T$.  Furthermore  $G_T= \gopt$  since it is connected, and any connected graph has a routing cost greater than $G_T$ and a maintenance cost greater or equal than $G_T$.

\begin{proposition} \label{price:tree}
If $G_T$ is a tree, then $PoS = PoA = 1$.
\end{proposition}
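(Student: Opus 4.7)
The plan is to establish three facts: (a) $G_T$ is itself a pairwise Nash topology; (b) it is the unique one for this traffic matrix; and (c) $G_T$ coincides with the social optimum $\gopt$. Together these immediately yield $PoS = PoA = P(G_T) = 1$.

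For (a) I apply Theorem~\ref{tpnet} to $G = G_T$. Since $G_T$ spans $V$ and is connected, condition~1 (mutual participants share a component) and condition~2 (forest) hold trivially. For condition~3, any edge $uv \in G_T$ satisfies $u \in Participants(v)$ and $v \in Participants(u)$ by the very definition of $G_T$; after deleting $uv$ the endpoint $u$ itself witnesses $C'_u \cap Participants(v) \neq \emptyset$ and $v$ itself witnesses $C'_v \cap Participants(u) \neq \emptyset$.

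For (b) I proceed by leaf-peeling induction on $n = |V|$. Let $G$ be any pairwise Nash topology. Condition~1 plus connectedness of $G_T$ force every pair of vertices into a common component of $G$, and condition~2 forces $G$ to be a forest, so $G$ must be a spanning tree of $V$. Pick a leaf $\ell$ of $G_T$ with unique $G_T$-neighbor $p$, so that $Participants(\ell) = \{p\}$. For any neighbor $x$ of $\ell$ in $G$, condition~3 applied to the edge $\ell x$ forces $p \in A_x$ after removing $\ell x$; that is, the unique $G$-path from $\ell$ to $p$ must begin with $\ell x$. Since only one edge can be the first step of this path, $\ell$ has degree exactly one in $G$. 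The complementary half of condition~3, namely $A_\ell \cap Participants(x) \neq \emptyset$ with $A_\ell = \{\ell\}$, then forces $x \in Participants(\ell) = \{p\}$, so $x = p$ and $\ell p \in G$. Removing $\ell$ yields a spanning tree $G - \ell$ on $V \setminus \{\ell\}$ for the reduced traffic matrix, and a short verification shows that $G - \ell$ still satisfies the three conditions of Theorem~\ref{tpnet}; the induction hypothesis then gives $G - \ell = G_T - \ell$, whence $G = G_T$.

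Finally for (c), $G_T$ has exactly $n - 1$ edges, the minimum for a connected graph on $V$, so its maintenance cost is minimal among connected topologies. Every pair $(i, j)$ with $t_{ij} > 0$ is adjacent in $G_T$, giving routing paths of length one, the absolute minimum. Any alternative topology $G$ either strictly contains $G_T$ (paying more maintenance), omits some edge of $G_T$ (paying strictly more routing, since the shortest replacement path has length at least two), or leaves some participant pair disconnected (incurring disconnection cost). Hence $\gopt = G_T$. The main obstacle is the uniqueness in (b): Theorem~\ref{tpnet} alone yields only that $G$ is a spanning tree, and pinning down $G = G_T$ edge-by-edge requires the leaf-peeling induction to exploit the fact that $G_T$ is a tree.
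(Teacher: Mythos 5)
Your proposal is correct, and its overall decomposition matches the paper's: show $G_T$ is a \pNe{} topology, show it is the \emph{unique} one, and show $G_T=\gopt$, whence $PoS=PoA=1$. The interesting difference is in the uniqueness step. The paper first proves a degree-equality lemma --- from condition~3 of Theorem~\ref{tpnet} and disjointness of the components hanging off a vertex in a tree one gets $\delta(u;G)\leq|Participants(u)|=\delta(u;G_T)$, and summing degrees over two spanning trees forces $\delta(u;G)=\delta(u;G_T)$ for every $u$ --- and then runs an induction over the onion layers $G_{i+1}=G_i-Leafs(G_i)$ of $G_T$, deriving a cycle (hence a contradiction with acyclicity) if some $G_T$-edge incident to a layer-$i$ leaf were missing from $G$. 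You instead peel a single leaf $\ell$ of $G_T$, use the two halves of condition~3 on the edges at $\ell$ to pin down $\delta(\ell;G)=1$ and the neighbor $x=p$ directly (the observation that $p$ can lie in the $x$-side component for at most one $G$-neighbor $x$ of $\ell$ is exactly the disjointness the paper uses, but applied only at $\ell$), and then recurse on the smaller instance $(V\setminus\{\ell\},\ T$ restricted$)$, for which $G_{T}-\ell$ is still a tree and $G-\ell$ still satisfies the three conditions (the only possible lost witness for condition~3 would be $\ell$ itself, which is impossible since $\ell$ hangs off $p$). This buys you a self-contained induction on $n$ that avoids both the global degree count and the terminal cycle argument; the paper's version keeps the instance fixed and instead leans on the degree lemma. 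Your part (c) matches the paper's one-line justification that $G_T=\gopt$ (minimal maintenance, distance-one routing for every traffic pair, and disconnection penalized), so the whole argument goes through.
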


If $G_T$ is a complete graph then the price of the anarchy changes substantially.
For example, if we consider a all-to-all traffic patern, this is $G_T$ is complete and all traffic has the same weight, then a simple path maximizes the distances between all nodes, thus maximizes the social cost.
An example of efficient topology for the same traffic pattern is the star centered at a node with minimal routing cost.

In general, if $G_T$ is the complete graph, then the price of the star is less than $2$. The maintenance cost is $2(n-1)\pi$, less or equal than the maintenance cost of the optimal. And for each packet sent by $u$ and received by $v$, in the optimal topology arises a cost greater or equal than $(c_u+c_v) t_{uv}$, while in the star arises exactly $(c_u+c_v+2c_{min}) t_{uv}$. The star centered in the minimal routing cost node is a spanning tree of $G_T$, so it is \pNe{} topology. It also can be prove that always exists a path with $PoA$ of the order $O(n)$ (if we have a weighted traffic pattern, the order of simple path affects the social cost).

Notice that the optimal depends of $\pi$ since from social point of view it may be preferable to have cycles when $\pi$ is small enough w.r.t. the amount of traffic through the nodes and the routing costs. For example, with $G_T$ complete and $\pi < c_{min} t_{min}$ the complete graph is the optimal ($t_{min} = min_{i,j \in V} \{ t_{ij} \}$), from a global point of view there is always incentive to add a link between two nodes.

\begin{proposition} 
If $G_T$ is a complete graph, then $PoS \leq 2$ and $PoA = O(n)$.
\end{proposition}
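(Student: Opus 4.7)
The plan is to work entirely over spanning trees of $G_T$. Since $G_T$ is complete, every pair $u,v$ satisfies $v\in Participants(u)$, so condition 1 of Theorem~\ref{tpnet} forces any \pNe{} topology to be connected; together with condition 2 (acyclicity) this means every \pNe{} topology is a spanning tree. Conversely, Corollary~\ref{cor1} guarantees that every spanning tree of the complete graph is \pNe{}. Hence $PoS$ and $PoA$ are, respectively, the minimum and maximum of $SC(T)/SC(\gopt)$ over spanning trees $T$.

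For the lower bound I would use a single generic inequality: for any topology $G$, the routing cost satisfies $R(G)=\sum_i c_i f(i;G)\geq \sum_{u\neq v}(c_u+c_v)\,t_{uv}$, because each unit of traffic $t_{uv}$ contributes to both $f(u;G)$ and $f(v;G)$. Under the reasonable assumption that $\gopt$ is spanning (otherwise the disconnection costs would dominate), its maintenance cost is at least $2(n-1)\pi$. This gives the common lower bound $SC(\gopt)\geq 2(n-1)\pi+\sum_{u\neq v}(c_u+c_v)t_{uv}$.

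For $PoS\leq 2$ I would exhibit the star $S$ centered at a node $v^\ast$ realizing $c_{v^\ast}=c_{min}$. Its maintenance cost equals $2(n-1)\pi$, matching the lower bound. For its routing cost, a packet between $u,v\neq v^\ast$ transits exactly the three nodes $u,v^\ast,v$, and packets involving $v^\ast$ transit only their endpoints, so $R(S)=\sum_{u\neq v}(c_u+c_v)t_{uv}+c_{min}\sum_{u,v\neq v^\ast}t_{uv}$. Bounding the second sum by $c_{min}\sum_{u\neq v}t_{uv}$ and using $c_u+c_v\geq 2c_{min}$ gives the slack term $\leq\tfrac{1}{2}\sum(c_u+c_v)t_{uv}\leq\tfrac{1}{2}SC(\gopt)$, so $SC(S)\leq\tfrac{3}{2}SC(\gopt)<2\cdot SC(\gopt)$.

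For $PoA=O(n)$ I would observe that every spanning tree $T$ has maintenance cost exactly $2(n-1)\pi$, and that any path in $T$ has length at most $n-1$, so each unit of traffic $t_{uv}$ contributes to at most $n$ transit counts, giving $R(T)\leq n\,c_{max}\sum_{u\neq v}t_{uv}$. Combined with the lower bound $R(\gopt)\geq 2c_{min}\sum_{u\neq v}t_{uv}$, this yields $SC(T)/SC(\gopt)\leq n\cdot c_{max}/(2c_{min})=O(n)$ (treating the routing coefficients as constants in $n$). The matching lower bound $PoA=\Omega(n)$ would come from a Hamiltonian path with uniform traffic, where traffic between the two endpoints must transit all $n$ nodes; this is essentially the example already discussed in the preceding paragraph of the paper. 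The only delicate step is the $SC(\gopt)$ lower bound when one allows disconnected optima, which one handles by checking via Assumption~2 that any pair of disconnected participants can be connected at a net social benefit, so the optimum is spanning when $\pi$ is not too large and otherwise the bounds only improve.
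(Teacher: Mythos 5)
Your argument is correct and follows essentially the same route as the paper: the star centered at the minimum-routing-cost node compared against the lower bound $SC(\gopt)\ge 2(n-1)\pi+\sum_{u\ne v}(c_u+c_v)t_{uv}$ gives $PoS\le 2$, and the characterization of equilibria as spanning trees of the complete graph (so every routing path has length at most $n-1$) gives $PoA=O(n)$. If anything, you are more explicit than the paper, which only gestures at the $O(n)$ upper bound via the path example and silently assumes the optimum is connected when comparing maintenance costs.
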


From the computational point of view we show that it is really to decide whether, given the parameters that define the network formation game, there exists a pairwise Nash topology with a social cost less than or equal to a given cost. Even the problem of deciding whether given a game setting, there is a topology with social cost less than or equal to a given cost.
Formally these problems are defined as follows,\\
\noindent \LSCE{}: Given a tuple $\langle T, \pi, (c_i)_{i \in G}, C \rangle$, that defines the game setting
\emph{decide} whether there is a \pNe{} topology $G$ such that $SC(G) \leq C$.\\
\noindent \LSC{}: Given a tuple $\langle T, \pi, (c_i)_{i \in G}, C \rangle$, that defines a game setting
\emph{decide} whether there is a topology $G$ such that $SC(G) \leq C$.

\begin{theorem} \label{thm:com}
\LSC{} and \LSCE{} are $\NP$-complete problems.
\end{theorem}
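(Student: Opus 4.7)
The plan is to prove membership in \NP{} first and then establish \NP{}-hardness by a reduction from a covering problem, most naturally \rxtc{} (whose macro the paper has already introduced).

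For membership, a topology $G$ itself is a polynomial-size certificate. Given $G$ we can compute $SC(G)$ in polynomial time: run an all-pairs shortest-paths algorithm to determine the (possibly fractional) routing of every demand $t_{ij}$, accumulate the traffic $f(u;G)$ through each node, compute the maintenance term $\pi\,\delta(u;G)$, and add the disconnection cost, which by Assumption~1 is determined purely by which participants of $u$ are missing from $C_u$. Comparing the result to the bound $C$ settles \LSC{}. For \LSCE{} we additionally verify, by Theorem~\ref{tpnet}, that $G$ is a \pNe{} topology: checking that participants share a component, that $G$ is acyclic, and that every edge $uv$ satisfies the connectivity condition of clause~3 can all be done in polynomial time by linear-time graph traversals after removing each edge in turn.

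For hardness, I would reduce from \rxtc{} (universe $X=\{x_1,\dots,x_{3q}\}$ and a collection $\mathcal C=\{c_1,\dots,c_m\}$ of $3$-element subsets, each element occurring in a bounded number of sets) to \LSC{}. The game instance introduces one \emph{element node} for each $x_i$, one \emph{set node} for each $c_j$, and a distinguished \emph{root} node $r$. Put traffic $t_{x_i,c_j}>0$ iff $x_i\in c_j$, a large-weight demand $t_{r,x_i}$ from the root to every element, and routing costs $c_{c_j}$ that are strictly smaller than those of element nodes so that shortest paths are forced to traverse set nodes. Choose $\pi$ large enough that selecting more than $q$ set nodes to connect to $r$ is strictly worse than $q$, but small enough that leaving an element disconnected from $r$ is prohibitive via the disconnection cost of Assumption~1. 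With those parameters calibrated, the minimum social cost equals an explicit value $C^\star$ exactly when $\mathcal C$ admits an exact cover, because any feasible topology must attach each element to the root through some set node, and the cost is minimized precisely by choosing $q$ pairwise-disjoint sets. The same gadget gives \LSCE{}: the minimum-cost structure is a spanning forest of $G_T$ and therefore already a \pNe{} topology by Corollary~\ref{cor1}, so the threshold $C^\star$ is achievable by a \pNe{} topology iff it is achievable by any topology, iff an exact cover exists.

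The main obstacle I anticipate is the parameter balancing: one must pick $\pi$, the $c_i$'s, the traffic values and the bound $C^\star$ so that (i) every node's most efficient route really does go through a set node, not directly between elements, (ii) choosing fewer than $q$ sets leaves an element disconnected and incurs an unaffordable disconnection penalty under Assumption~1, and (iii) every spurious ``non-star'' topology (for instance using extra shortcut edges between elements, or a longer path of set nodes) is strictly worse than the intended cover-induced topology. Making these inequalities simultaneously tight while keeping all numbers polynomial in $|X|+|\mathcal C|$ is the delicate part; once that gadget is pinned down, both \LSC{} and \LSCE{} are handled by the same reduction, and membership in \NP{} completes the proof.
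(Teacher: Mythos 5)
Your \NP{} membership argument is fine and matches what the paper leaves implicit. The hardness gadget, however, has a concrete flaw that the paper's two reductions are specifically engineered to avoid. By placing a large demand $t_{r,x_i}$ from the root to every \emph{element}, you make every pair $(r,x_i)$ participants, so the edge $rx_i$ is available in $G_T$ and the star centered at $r$ (with each set node hung off one of its own elements) is a spanning tree of $G_T$, hence already a \pNe{} topology by Corollary~\ref{cor1}. That star routes every heavy demand over one hop, uses the same number of edges ($3q+m$) as any cover-induced topology, and so has social cost no larger than the intended solution \emph{whether or not an exact cover exists} --- the reduction cannot be sound as stated. Your proposed fix, giving set nodes smaller routing costs ``so that shortest paths are forced to traverse set nodes,'' misreads the model: traffic is routed along hop-count shortest paths (the $c_i$ are per-unit transit costs, not edge lengths), so a one-hop direct edge always beats a two-hop detour through a set node, both for path selection and for social cost. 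This is why the paper puts the heavy demand from $r$ to the \emph{set} nodes, forcing the edges $r\sigma$ into every cheap topology, and keeps elements reachable only via sets.

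Two further gaps. First, even if you force each element to attach to some containing set, nothing in your cost accounting distinguishes an exact cover from an arbitrary choice of one containing set per element: the set-to-element routing cost is linear in the total number of realized set--element incidences, which is always $3q$. The paper's gadget needs the intra-triple element-to-element demands ($\tau,\tau'\in\sigma$) precisely because their cost depends \emph{nonlinearly} on how many of $\sigma$'s three elements are its neighbours (the $2(6s_4+10s_3+12s_2+12s_1)$ computation), and that convexity is what certifies exactness; your gadget has no analogue. Second, you cannot enforce connectivity ``via the disconnection cost of Assumption~1'': the instance of \LSC{}/\LSCE{} is $\langle T,\pi,(c_i),C\rangle$, the function $D$ is not a parameter the reduction controls, and Assumption~1 only says $D$ vanishes when all participants are reachable --- it gives no lower bound. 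Finally, for \LSCE{} your intended cover topology attaches chosen sets $c_j$ to $r$, but $r\notin Participants(c_j)$ and all of $c_j$'s participants lie on $c_j$'s side of that edge, so condition~3 of Theorem~\ref{tpnet} fails and the topology is not a \pNe{} topology; the paper sidesteps this (again) by making every set a participant of $r$, and in fact uses a second, root-free gadget (a clique of heavy demands among the set nodes) for \LSC{}.
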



\bibliography{bibliografia}

\newpage
\appendix
\section{Pairwise Nash Stability}
\begin{proof}[corollary~\ref{cor1}]
Let $G$ be a spanning forest of $G_T$, then (1) for every pair $u, v \in G$ such that $u \in Participants(v)$ then $C_u = C_v$, (2) $G$ is a forest and (3) for every link $uv \in G$, since $u \in Participant(v)$ then if $G' = G - uv$, we have that  $C'_u \cap Participants(v) \neq \emptyset$ and $C'_v \cap Participants(u) \neq \emptyset$.
\qed
\end{proof}

\begin{proof}[corollary~\ref{cor2}]
If $G$ is a tree then $G$ is \pNe{}, by corollary~\ref{cor1}.

If $(G,\Gamma,\mathbf{P})$ is a \pNe{} configuration, $G$ has no cycles because $\pi > 0$ and all nodes are connected to its participants, therefore $G$ is a tree.
\qed
\end{proof}

\section{Myopic Dynamics}
\begin{proof}[proposition~\ref{pro:sinkpne}]
Let $(G^{(k)}, \Gamma^{(k)}, \mathbf{P}^{(k)})$ be a sink, then:
\begin{enumerate}
\item No node wants to update a contract. Thus for all $(u,v) \in \Gamma^{(k)}$, $p_{uv} = Q(u,v;G^{(k)})$, so $(G^{(k)}, \Gamma^{(k)}, \mathbf{P}^{(k)})$ is a feasible configuration.
\item No node wants to break a  contract. Then 
$G^{(k)}$ is an acyclic graph, otherwise it can be show that a node $u$ will have incentive to break a contract. Let $uv$ be a link that is contained in some cycle. Without loss of generality, we assume that  $p_{uv} - p_{vu} \geq 0$, then $u$ has incentive to break the link:  in $G-uv$ $u$  remains connected to its participants, the total of routing traffic through $u$ at $G-uv$ is less or equal that at $G$ and finally increases its payoff, because $p_{uv} - p_{vu} + \pi > 0$.

Now, let us suppose that there is a node $u$ that has incentive to break a minimal set  $A$ of links, where $|A| > 2$. Let $v$ be a node such that $uv \in A$, and let $G' = G-uv$, clearly $C'_v \cap Participants(u) = \emptyset$. Since  $u$ has no incentive to break $uv$, then $c_u \, f(uv;G) + \pi < p_{vu} - p_{uv}$. If we consider the deviation of node $u$ corresponding to break a set of links $A' = A -\{uv\}$, if  $G'' = G - A'$, then $c_u \, f(uv;G'') + \pi \leq c_u \, f(uv;G) + \pi < p_{vu} - p_{uv}$. Therefore if $A$ is a profitable deviation,  then $A'$  is profitable too. Note that  once the $A'$ links have been broken, to break  $uv$ does not increase the $u$ utility's. Since $|A| > |A'|$, then we have a contradiction with the fact that $A$ is minimal.

So in a sink configuration all unilateral deviations leads to a utility loss.

\item No pair of  nodes can improve their playoff adding a new link. Then, in  $(G^{(k)}, \Gamma^{(k)}, \mathbf{P}^{(k)})$ no bilateral deviation is profitable.
\end{enumerate}
\qed
\end{proof}

\begin{proof}[ lemma~\ref{lem:sink}]
If $(G,\Gamma,\mathbf{P})$ is a sink then is a \pNe{}: $G$ is acyclic, thus $C_F = 0$, $(G, \Gamma, \mathbf{P})$ is a feasible configuration, thus $A_P = 0$, and all nodes are connected to their participants, thus $A_E = 0$.

If the  degree of a configuration  $(G,\Gamma,\mathbf{P})$  is  $(0,0,0,0)$, then the configuration is feasible, $G$ is a spanning tree of $G_T$ ($G$ is acyclic and for all $uv \in G, \, u \in Participants(v)$), and finally  any action of a node $u$  other than do nothing decrements strictly its utility.  Notice that there is no incentive to update a contract because the configuration  is feasible. Since  for all $(u,v) \in \Gamma$ we have that $ (v,u) \notin \Gamma$ then every  link is  a cutting edge between two participants. Finally, if   the utility of $u$ does not decrease when a link $uv$ is added, then $p_{uv} < 0$. Hence $v$ has no incentive to accept the proposal.
\qed
\end{proof}

\begin{proof}[theorem~\ref{thm:complete_convergence}]
Given any  all-to-all traffic matrix, we have that $C_E = 0$ in any feasible configuration. Then no cutting edge will ever be deleted. While the current topology is not connected,  the optimal action of any node is to add a link. 

Once $G$ is connected, the expected number of rounds needed to  delete a link is  $O(n^2 \, \ln n)$. Thus in $O(n^4 \, \ln n)$ rounds the topology is a tree.

Once $G$ is a tree only remains to update all contracts for arrive to a skin, that needs an additional $O(n^2 \, \ln n)$ rounds.
\qed
\end{proof}

\begin{proof}[theorem~\ref{thm:tree_convergence}]
Let $G_T$  be a tree. We are going to show that if $A_E \geq 1$ , then with probability greater than $\frac{1}{n}$ the active node $u_k$ will add a $(u_k,v)$ where  $v \in Participant(u_k)$.

Notice that  if $A_E \geq 1$ then  there exists a set of nodes $U = \{u_1,u_2,\ldots,u_l\}$  such that each $u_i$ is not connected to all its participants. For each $u_i$ let  $P_i$ the set of connected components $P_i =\{C_{u_{i1}},C_{u_{i2}},\ldots\}$ such that if $P \in P_i$ then $P \cap Participants(u_i) \neq \emptyset$.

For each node $u_i$ let us  denote by $v_i$ the node chosen by $u_i$ to create a link as its best optimal action$u_i$ and let  $p_i \in Participants(u_i) \cap C_{v_i}$.

We define a directed graph $G' = (U,\{(u_i,p_i) : u_i \in U)$. $G'$ has at least a cycle because every  node in $G'$ is the source of an edge. Furthermore,  every cycle of $G'$ has length $2$. Notice that if there is a cycle with length greater than two $w_1,w_2,w_3,\ldots,w_1$ then $T_{w_1w_2} + T_{w_2w_1},\, T_{w_2w_3} + T_{w_3w_2},\,\ldots,\, T_{w_{l'}w_1} + T_{w_1w_{l'}} > 0$,  which is a contradiction with the fact that $G_T$ is a tree.

Hence, if $A_E \geq 1$ there exist  $u_i,u_j$ such that $u_i \in Participants(u_j)$, and $u_i$ wants to connect to $C_{u_j}$
and $u_j$ wants to connect to $C_{u_i}$ . By the affinity of $Q$ we have tha if  the active node is  $u_i$ or $u_j$ then it will add a link one of its participants.

\vspace{2mm}
At each round $k$, depending on  the current degree, we have one of the following cases:\\
$A_E = 0, C_E = 0$. After deleting  all cycles and after updating all contracts  a sink configuration is reached (with $A_E = 0, C_E = 0$ it is not possible to add o remove a cutting edge that routes traffic).\\
$A_E = 0, C_E > 0$. Eventually either  a sink configuration is reached  or a cutting edge which routes traffic is deleted .\\
$A_E > 0$. With probability greater than $\frac{1}{n}$ the active node will add a link with a participant. Then dynamics reaches to a configuration such that $A'_E = A_E - 1$, $C'_E = C_E$ configuration. And with probability less than $1-\frac{1}{n}$ the active node will add a link with a no participant and then $A'_E = A_E - 1$, $C'_E = C_E+1$.\\

In the worst case dynamics reaches to $C_E = 0, A_E = 0$  after  adding $n-1$ links between participants. Then it is needed to visit  an  $A_E > 0$ configuration $O(n^2)$ times to have  $C_E = 0, A_E = 0$. 

Furthermore if $A_E > 0$, the active node adds a link, When $A_E = 0$ but the current configuration is not a sink   $A_E = 0, C_E > 0$,  all contracts are updated before that any change in the topology occurs in  an expected number of rounds $O(n^2\, \ln n)$.  In addition, all cycles are deleted at  $A_E = 0, C_E > 0$ configurations. After deleting a cycle an additional $O(n^2\, \ln n)$  rounds in expectation are needed  to  update all contracts.

The dynamics passes  $O(n^2)$ times from an $A_E = 1$ configuration to $O(n^2 \, \ln n)$  configurations with  $A_E = 0$.  Besides,  $O(n^4 \, \ln n)$ rounds in expectation are needed  to delete all cycles.
\qed
\end{proof}

\section{Efficiency and Complexity}
\begin{proof}[proposition~\ref{price:tree}]
Let us suppose that there is a \pNe{}   topology $G$ such that  $G\not = G_T$. We know that $G$ must be connected and acyclic. If there  exists a node $u$ such that $\delta(u;G) > \delta(u;G_T)$, since $u$ has $\delta(u;G_T)$ participants, then  $G$ can not be a \pNe{} topology. Since $G$ is a tree and for all $\delta(u;G) \leq \delta(u;G_T)$ then $\delta(u;G) = \delta(u;G_T)$.

Finally we prove that any link $uv$ of  $G_T$ is also a link of $G$. We define $G_0 = G_T$ and for $i \geq 0, \, G_{i+1} = G_{i} - Leafs(G_i)$. For each $i\geq 0$ and for each $uv \in G_i$  such that $\delta(u;G_i) = 1$, we have $uv \in G$. In the base case the leafs of $G$ and $G_T$ are the same, and for each leaf $u$ such that $uv \in G$ and $G$ is a \pNe{} then $u \in Participants(v)$.Therefore $uv \in G_T$.  Let us  assume that the hypothesis holds for $i-1$. If there is a $uv$ such that $u$ is a leaf in $G_i$ and $uv \notin G$, by the induction hypothesis and because $\delta(u;G) = \delta(u;G_T)$ exists a link $uw \in G$ such that $uw \notin G_T$. The node $u$ wants to keep $uw$ because it is contained in the path from $u$ to $v$. If $w$ has not incentive to remove the link it is necessary that for some $w' \in Participant(w)$ the path in $G$ between $w$ and $w'$ passes through $u$. Then we have that $ww' \in G_T$, but this link is incident to a leaf $w'$ in $G_j$ for some $j < i$, for hypothesis $ww' \in G$, so $G$ has a cycle $(u,w,w',\ldots,u)$, contradiction.
\qed
\end{proof}

\begin{proof}[theorem~\ref{br}]
We transform \is{} to \br{}. Let $\langle G,C \rangle$ be a \is{} instance, where $G$ is a graph and $0 \leq C \leq |V(G)|$. We construct a \br{} instance as follows:
\begin{enumerate}[label=$-$]
\item $S$ is a star graph with center at the $u$ node, where $V(S) = V(G) \cup \{u\}$ ($u \notin V$) and $E(S) = \{uv : v \in V(G)\}$.
\item $\Gamma = \{(v,u) : v \in V(G)\}$ is a contract graph that generates $S$ topology.
\item $\mathbf{P}$ is the update traffic matrix generated by $\Gamma$, if $(v,u) \in \Gamma$ then $p_{vw} = Q(v,w;G)$, otherwise $p_{vw} = 0$
\item $Q$ is some contract function such that for all $G$ then $Q(v,u;G) = 2$ and $Q(u,v;G) -2$.
\item $\pi = 1$ is the maintenance cost.
\item $c_w = 1$ for all $w \in V$, is the routing cost.
\end{enumerate}

Now we show that $\langle G, C \rangle \in \is \iff \langle S,\Gamma,\mathbf{P}, Q, T, \pi, (c_i)_{i \in S}, u, C \rangle \in \br$.

If $\langle G, C \rangle \in \is$ then there is an independent set $I \subseteq V(G)$ such that $|I| \geq C$. Let us consider the $u$ action corresponding to delete $I' = \{uv : v \!\notin\! I\}$  links (breaking the $\{(v,u): v \!\notin\! I \}$ contracts). The $u$ utility is $U(u;S-I',\mathbf{P}') = |I| \geq C$, since $u$ remains connected only to nodes  $I$ which  don't send traffic in $S-I'$. Hence  $u$ has a  maintenance cost $|I|$ and receives  utility $2|I|$ from the other nodes.

Let us suppose that $u$ applies a deviation such that produces an utility  is greater  than or equal to $C$. The deviation neither can  add  a new link, nor  can update a contract. Then  the only possible deviation  is to break a set of contracts. Let $D$ be the set of nodes such that $u$ remains connected after breaking  $D'$ links and $U(u;S-D',\mathbf{P}') \geq C$. $D$ is an independent set, otherwise $u$ routes at least $|V(S)|$ units of traffic , and $u$ utility is $U(u;S-D',\mathbf{P}') \leq -2|V(S)| +      D < 0 \leq C$. Since $D$ is an independent set, then $|D| \geq C$, otherwise $U(u;S-D',\mathbf{P}') < C$.
\qed
\end{proof}

In order to prove \lsce{} and \lsc{} $\NP$-completeness we first define an intermediate $\NP$-complete language. This problem is just a \XTC{} restriction version.

\begin{definition}
Given a tuple $\langle T, S \rangle$, where $T = \{\tau_1,\ldots,\tau_{3t}\}$ is a set of terminals and $S = \{\sigma_1,\ldots,\sigma_s\}$ a subset of 3-subsets   $\sigma_i = \{\tau_{i1},\tau_{i2},\tau_{i3}\}$ of $T$, such that $\forall \sigma,\sigma' \in S \ |\sigma \cap \sigma'| \leq 1$; \textsl{decide} whether there is an exact cover $C$ of $S$, that is, $\bigcup_{\sigma \in C}\sigma = T$ and $\forall \sigma,\sigma' \in C \ \sigma \cap \sigma' = \emptyset$;
\end{definition}

\begin{theorem}
\rxtc{} is $\NP$-complete.
\end{theorem}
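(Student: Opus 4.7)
The plan splits into membership in $\NP$ and $\NP$-hardness by reduction from \xtc{}.

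\emph{Membership.} A candidate family $C \subseteq S$ is a polynomial-size certificate: one checks in polynomial time that $|C| = t$, that the sets in $C$ are pairwise disjoint, and that $\bigcup_{\sigma \in C} \sigma = T$.

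\emph{Hardness.} I would reduce from \xtc{}, in which triples may share up to two elements. Given an \xtc{} instance $\langle T, S\rangle$, the plan is to construct an \rxtc{} instance $\langle T', S'\rangle$ by \emph{splitting occurrences}: for each $\tau \in T$ that appears in subsets $\sigma_{i_1}, \dots, \sigma_{i_{k(\tau)}}$, introduce $k(\tau)$ private copies $\tau^{(1)}, \dots, \tau^{(k(\tau))}$, and in the subset $\sigma_{i_j}$ replace $\tau$ by $\tau^{(j)}$, yielding a modified triple $\sigma'_{i_j}$. After this step any two distinct modified triples $\sigma'_i, \sigma'_j$ are disjoint, so the pairwise-intersection-$\le 1$ condition trivially holds among them; but the cover property is broken because nothing forces a coherent choice among the copies. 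To restore coherence, I would attach to each original element $\tau$ with multiplicity $k=k(\tau)$ a \emph{choice gadget}: a collection of fresh 3-subsets built over $\tau^{(1)}, \dots, \tau^{(k)}$ and a batch of private auxiliary elements, engineered so that every exact cover of the gadget leaves \emph{exactly one} copy $\tau^{(j)}$ uncovered, which is then forced to be supplied by the modified triple $\sigma'_{i_j}$. All auxiliaries are fresh per gadget, ensuring that gadget subsets pairwise intersect in at most one element and share at most one element with any $\sigma'_{i_j}$. Finally I would pad $T'$ with a constant number of auxiliary elements per gadget so that $|T'|$ is a multiple of three, as required by the problem statement.

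\emph{Correctness.} An exact cover $C$ of $\langle T, S\rangle$ lifts to an exact cover of $\langle T', S'\rangle$: for each $\tau \in T$, pick the modified triple $\sigma'_{i_{j(\tau)}}$ corresponding to the unique $\sigma_{i_{j(\tau)}} \in C$ containing $\tau$, then cover every other copy $\tau^{(j)}$ together with the auxiliaries via the gadget's default exact cover. Conversely, from an exact cover of $\langle T', S'\rangle$ the tightness of the gadget lets us read off a unique index $j(\tau)$ for each $\tau$, and the collection $\{\sigma_{i_{j(\tau)}}\}$ is an exact cover of $\langle T, S\rangle$ since distinct indices cannot pick the same original triple without creating a collision on some original element. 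The transformation is polynomial because each gadget has size linear in the multiplicity $k(\tau)$, and $\sum_\tau k(\tau) = 3|S|$.

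\emph{Main obstacle.} The delicate step is designing the choice gadget so that three conditions hold simultaneously: every subset is a 3-set, the pairwise-intersection-$\le 1$ restriction is preserved, and the gadget is \emph{tight}, i.e. its only exact covers (modulo which $\tau^{(j)}$ is left out) correspond bijectively to choosing one occurrence. A natural implementation is a cyclic chain of triples of the form $\{\tau^{(j)}, a_j, b_j\}$, $\{\tau^{(j+1)}, b_j, c_j\}$, etc., with fresh auxiliaries $a_j, b_j, c_j$, so that the only exact covers of the chain rotate through exactly one ``break point''; verifying this rotation property amounts to a direct but careful case analysis.
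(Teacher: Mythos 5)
Your overall strategy (membership in $\NP$ is routine; reduce from \xtc{} by splitting each element into one private copy per occurrence, so the modified triples become pairwise disjoint, and then restoring coherence with a per-element choice gadget) is sound in outline, and the surrounding bookkeeping (forward/backward correspondence, polynomial size) is fine. But the proof has a genuine gap exactly where you locate the ``main obstacle'': the choice gadget is never constructed, and the entire difficulty of the reduction lives in it. Worse, the one concrete fragment you offer is inconsistent as written: in the family $\{\tau^{(j)},a_j,b_j\}$, $\{\tau^{(j+1)},b_j,c_j\}$ with $a_j,b_j,c_j$ fresh, the auxiliaries $a_j$ and $c_j$ each occur in a single triple, so both triples are forced into any exact cover; they then collide on $b_j$, and in any case every copy $\tau^{(j)}$ gets covered by the gadget, leaving nothing for the modified triples to supply. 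Whatever ``etc.'' is meant to add must be specified and then verified to (a) consist of $3$-sets, (b) keep all pairwise intersections of size at most $1$, (c) admit, for every $j$, an exact cover omitting precisely $\tau^{(j)}$, and (d) admit no other exact covers. Until such an object is exhibited and checked, the reduction is not a proof. There are also smaller loose ends: the degenerate cases $k(\tau)\in\{0,1\}$, and the proposal to ``pad $T'$ to a multiple of three,'' which as stated would add elements contained in no subset and turn every instance into a no-instance.

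For comparison, the paper sidesteps the per-element gadget entirely with a per-set selector: for each $\sigma_i=\{\sigma_{i1},\sigma_{i2},\sigma_{i3}\}$ it introduces six fresh elements $\#_{ij},\#'_{ij}$ and the five triples $\{\#_{ij},\#'_{ij},\sigma_{ij}\}$ for $j=1,2,3$, together with $\{\#_{i1},\#_{i2},\#_{i3}\}$ and $\{\#'_{i1},\#'_{i2},\#'_{i3}\}$. Covering the six fresh elements forces exactly one of two modes --- either all three mixed triples (which also cover $\sigma_i$'s elements, i.e.\ ``$\sigma_i$ selected'') or the two pure triples (``$\sigma_i$ not selected'') --- and the intersection bound is immediate by inspection. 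That gadget has only two modes and a two-line verification, which is why the paper's reduction is complete where yours is not; if you want to pursue the per-element route, expect to carry out a comparably explicit case analysis for a $k$-mode gadget.
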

\begin{proof}
We transform \xtc{} to \rxtc{}. Let $\langle T, S \rangle$ be a \xtc{} instance, we construct a $\langle T', S' \rangle$  \rxtc{} instance as follows:
\begin{align*}
& \quad N = \{\#_{ij}, \#'_{ij} : 1 \leq i \leq s, 1 \leq j \leq 3\} \\
& \quad N \cap T = \emptyset \\
& \quad T' = T \cup N \\
& \quad S' =  \begin{array}{l l} 
  & \big\{ \{\#_{ij},\#'_{ij}, \sigma{ij}\} : 1 \leq i \leq s, 1 \leq j \leq 3 \big\} \; \cup \\
  & \big\{ \{\#_{i1},\#_{i2},\#_{i3} \} , \{\#'_{i1}, \#'_{i2}, \#'_{i3} \} : 1 \leq i \leq s \big\} \\
\end{array}
\end{align*}
Note that the reduction is correct, for all $\sigma, \sigma' \!\in\! S'$ if $\sigma \neq \sigma'$ then $|\sigma \cap \sigma'| \leq 1$.
\qed
\end{proof}

Now we show that that $\langle T,S\rangle \!\in\! \xtc \iff \langle T', S' \rangle \!\in\! \rxtc$.

If there is an exact 3-cover $C \subseteq S$ for $\langle T, S \rangle$, then we can construct an exact 3-cover $C' \subseteq S'$ for $\langle T', S' \rangle$:
\[
1 \leq i \leq s \quad
\begin{array}{l l}
\{\#_{ij}, \#'_{ij}, \sigma_{ij}\} \!\in\! C', \quad \quad 1 \leq j \leq 3 & \quad \text{if } \sigma_i \!\in\! C \\
\{\#_{i1}, \#_{i2}, \#_{i3}\}, \{\#'_{i1},\#'_{i2},\#'_{i3}\} \!\in\! C' & \quad \text{if } \sigma_i \!\notin\! C \\
\end{array}
\]
$C'$ contains all $N$ terminals only once. Further, since $C$ is a solution, $C$ contains all $T$ terminals only once, so $C'$ also contains all $T$ terminals only once.

If exists a exact 3-cover $C' \subseteq S'$ for $\langle T', S' \rangle$, then we can construct an exact 3-cover $C \subseteq S$ for $\langle T,S \rangle$:
\[
1 \leq i \leq s \quad
\begin{array}{l l}
\sigma_i \!\notin\! C & \quad \text{if } \{\#_{i1}, \#_{i2}, \#_{i3}\} \!\in\! C' \\
\sigma_i \!\in\! C & \quad \text{if } \{\#_{i1}, \#_{i2}, \#_{i3}\} \!\notin\! C' \\
\end{array}
\]
$C'$ contains all $T'$ terminals only once, if $\{\#_{i1}, \#_{i2}, \#_{i3}\} \!\notin\! C'$, then for all $1 \leq j \leq 3$, $\{\#_{ij}, \#'_{ij}, \sigma{ij}\} \!\in\! C'$.

\begin{proof}[theorem~\ref{thm:com}]
We transform \rxtc{} to \lsce{}. Let $\langle T, S \rangle$ be a \rxtc{} instance, we construct a $\langle M, \pi, (c_i)_{i \in V(GM)}, C \rangle$ \lsce{} instance as follows:

{\newsavebox{\tempbox}
\begin{center}
\begin{figure}[!ht]
\sbox{\tempbox}{
\begin{tikzpicture}

 \node at (0,3.5) {$GM$};

 \node[draw=black,circle,inner sep=0pt, minimum size=5mm] at (3,4) (r) {\tiny{$r$}};

 \node[draw=black,circle,inner sep=0pt, minimum size=5mm] at (1,2) (s1) {\tiny{$\sigma_1$}};
 \node[draw=black,circle,inner sep=0pt, minimum size=5mm] at (5,2) (ss) {\tiny{$\sigma_s$}};

 \node[draw=black,circle,inner sep=0pt, minimum size=5mm] at (0,0) (t11) {\tiny{$\tau_{11}$}};
 \node[draw=black,circle,inner sep=0pt, minimum size=5mm] at (1,0) (t12) {\tiny{$\tau_{12}$}};
 \node[draw=black,circle,inner sep=0pt, minimum size=5mm] at (2,0) (t13) {\tiny{$\tau_{13}$}};

 \node[draw=black,circle,inner sep=0pt, minimum size=5mm] at (4,0) (ts1) {\tiny{$\tau_{s1}$}};
 \node[draw=black,circle,inner sep=0pt, minimum size=5mm] at (5,0) (ts2) {\tiny{$\tau_{s2}$}};
 \node[draw=black,circle,inner sep=0pt, minimum size=5mm] at (6,0) (ts3) {\tiny{$\tau_{s3}$}};

 \draw [->] (r) -- node[above left] {\tiny{$k$}} (s1);
 \draw [->] (r) -- node[above right] {\tiny{$k$}} (ss);

 \draw [->] (s1) -- node[near end, above left] {\tiny{$k'$}} (t11);
 \draw [->] (s1) -- node[near end, above left] {\tiny{$k'$}} (t12);
 \draw [->] (s1) -- node[near end, above left] {\tiny{$k'$}} (t13);

 \draw [->] (ss) -- node[near end, above left] {\tiny{$k'$}} (ts1);
 \draw [->] (ss) -- node[near end, above left] {\tiny{$k'$}} (ts2);
 \draw [->] (ss) -- node[near end, above left] {\tiny{$k'$}} (ts3);

 \draw [->] (t11) -- node[above] {\tiny{$1$}} (t12);
 \draw [->] (t12) -- node[above] {\tiny{$1$}} (t13);
 \draw [->] (t11) .. controls +(down:1cm) and +(down:1cm) .. node[below] {\tiny{$1$}} (t13);

 \draw [->] (ts1) -- node[above] {\tiny{$1$}} (ts2);
 \draw [->] (ts2) -- node[above] {\tiny{$1$}} (ts3);
 \draw [->] (ts1) .. controls +(down:1cm) and +(down:1cm) .. node[below] {\tiny{$1$}} (ts3);
\end{tikzpicture}
}
\subfloat{\usebox{\tempbox}}%
\hspace{-3cm}
\subfloat{\vbox to \ht\tempbox{%
  \vfil
\begin{flushleft}
{\small
\begin{align*}
  V(GM)      =\ & S \cup T \cup \{r\} \\
  E(GM)      =\ & \{(r,\sigma,k) : \sigma \!\in\! S\} \ \cup \\
           & \{(\sigma,\tau,k') : \sigma \!\in\! S, \tau \!\in\! \sigma\} \ \cup \\
           & \{(\tau,\tau',1) : \sigma \!\in\! S, \tau < \tau' \!\in\! \sigma\} \\
  \pi    =\ & 1 \\
  c_i    =\ & 1 \quad \forall i \!\in\! V(GM)
\\
\end{align*}
}
\end{flushleft}

  \vfil}}%
\end{figure}
\end{center}}

The traffic matrix $M$ is defined by the directed weighted graph $G$: if $(i,j,w) \in G$ then $m_{ij} = w$, otherwise $m_{ij} = 0$.

Given a possible 3-cover $S' \subseteq S$ for $\langle T, S \rangle$ we can construct the associated graph $G(S')$ where $V(G(S')) = V(GM)$ and
\[E(G(S')) = \{ (r,\sigma) : \sigma \!\in\! S \} \cup \{ (\sigma,\tau) : \sigma \!\in\! S', \tau \!\in\! \sigma\}\]
If $S'$ is an exact 3-cover then $G(S')$ is a spanning tree of $GM$. We define $C$ in such way that if $S'$ is a exact 3-cover then $C(G(S')) = C$.

Let $S'$ be an exact 3-cover, we denote by $C_{XY}$ the routing cost arising for the traffic sent from $X$ to $Y$ nodes, in the $G(S')$ topology.

In $G(S')$, $\forall \sigma \in S \ t_{r\sigma} = k$. For all $\sigma$ such that $(r, \sigma) \!\in\! E$, then the traffic between a $\sigma$ and $r$, only is routed only through $\sigma$ and $r$ nodes, therefore $C_{RS} = 2sk$.

In $G(S')$, $\forall \sigma \!\in\! S,\, \forall \tau \!\in\! \sigma$, $t_{\sigma\tau} = k'$. For a $\sigma \!\in\! S'$ node the shortest path to $\tau \!\in\! \sigma$ is $\langle \sigma,\tau \rangle$. Instead to a $\sigma \!\notin\! S'$ node the shortest path to $\tau \!\in\! \sigma$ is $\langle \sigma,r,\sigma',\tau \rangle$ where $\sigma' \!\in\! S'$. Therefore we have that $C_{ST} = 2k'3t + 9k'(s-t) = 6k'(3s-2t)$.

Finally to calculate $C_{TT}$ we distinguish again  two cases. Let $\tau_1, \tau_2 \!\in\! \sigma$ such that $\sigma \!\in\! S'$, the shortest path is $\langle \tau_1,\sigma,\tau_2 \rangle$. The total amount of traffic arises for these nodes is $12t$, because $|S'| = 3t$. If $\tau_1, \tau_2 \!\in\! \sigma$ such that $\sigma \!\notin\! S'$, the shortest path is $\langle \tau_1,\sigma,r,\sigma',\tau_2 \rangle$ where $\sigma' \!\in\! S'$ because $\forall \sigma, \sigma' \!\in\! S$ such that $\sigma \neq \sigma'$, $|\sigma \cap \sigma'| \leq 1$. $|S-S'| = s-t$ so the cost is $24(s-t)$, in total we have $C_{TT} = 4(6s-3t)$.

We  define $k$, $k'$ and $C$ from the $C_{XY}$ values in such a way  that only the topologies $B = G(S')$ for an exact 3-cover $S'$ satisfies $SC(B) \leq C$:
\begin{align*}
C_{TT} & = 4(6s-3t)                             &  C_{ST} & = 6k'(3s-2t) \\    
k'     & = C_{TT}                               &  k & = C_{ST} + k' \\
C_{RS} & = 2sk                                  &  C & = C_{RS} + C_{ST} + C_{TT} + 2 \pi (s+t)
\end{align*}

Now we show that $\langle T,S\rangle \!\in\! \rxtc \iff \langle M, \pi, (c_i)_{i \in V(GM)}, C \rangle \!\in\! \lsce$:

If $S' \subseteq S$ is an exact 3-cover of $T$, the tree $G(S')$ is a \pNe{} topology such that  $SC(G(S')) = C$.

Let  $G=(V,E)$ be a \pNe{} such that $\SC(G) \leq C$. $G$ is a tree and the maintenance cost is $2\pi(s+t)$.
Below we show that $G$ is a spanning tree of $GM$:
\begin{enumerate}[label=(\roman{*})]
\item $\forall \sigma\!\in\!S \ (r,\sigma)\!\in\!E$, otherwise we have a partition $\{X,Y\}$ of $S$ such that  $\sigma \!\in\! X \!\!\!\implies\!\!\! (r, \sigma) \!\in\! E$, $\sigma \!\in\! Y \!\!\!\implies\!\!\! (r, \sigma) \!\notin\! E$ and $|Y| \geq 1$. Then the routing cost arising for the traffic sent from $R$ to $S$ is greater or equal than $2k|X| + 4k|Y| = 2k(|X|+|Y|) + 2k|Y|$ and the social cost $\SC(G) > 2ks + 2k|Y| + 2\pi(s+t) \geq 2sk + 2k + 2\pi(s+t) > C$.
\item $\forall \tau\!\in\!T$ there is a unique $\sigma\!\in\!S$ such that $\tau\!\in\!\sigma$ and $(\sigma,\tau)\!\in\!E$, otherwise or exists at least a $\tau \!\in\! T$ with $\sigma,\sigma'$ such that $(\sigma, \tau),(\sigma',\tau) \!\in\! E$ (this is not possible because implies that $G$ has a cycle), or we have a partition $\{X,Y\}$ of $T$ such that $\tau \!\in\! X \!\!\!\implies\!\!\! \exists \sigma (\sigma, \tau) \!\in\! E \land \tau \!\in\! \sigma$, $\tau \!\in\! Y \!\!\!\implies\!\!\! (\forall \sigma \tau \!\in\! \sigma \!\!\!\implies\!\!\! (\sigma, \tau) \!\notin\! E)$ (it is possible that these $\tau$ are connected to a $\sigma$ node such that $\tau \!\notin\! \sigma$, to the $r$ node or to other $\tau' \!\in\! T$). Therefore $\SC(G) > C_{RS} + 6k'(3s-2t) -2k' + 4k > C $
\item We have already shown that $G$ is a spanning tree of $GM$, thus $G$ is \pNe{} topology, and $\SC(G) - C_{TT}(G) = C - C_{TT}$ (where $C_{TT}(G)$ is the cost arising for the the traffic from $T$ to $T$ in $G$). In order for the $C_{TT}(G) \leq C_{TT}$, let $s_i$ be the number of $\sigma \!\in\! S$ with degree $i$, we have $C_{TT}(G) =2 (6s_4 + 10s_3 + 12s_2 + 12s_1)$ with $3s_4 + 2s_3 + s_2 = 3t$ and $s_1 + s_2 + s_3 + s_4 = s$. Manipulating expressions:
\begin{align*}
s_4 =\ & t - \frac{2s_3 + s_2}{3} \qquad\qquad\qquad\qquad\hspace{1mm} s_1 = s - t + \frac{2s_3 + s_2}{3} - s_3 - s_2 \\
C_{TT}(G) =\ &2( (6t - 4s_3 - 2s_2) + 10s_3 + 12s_2 + (12s - 12t + 8s_3 + 4s_2 - 12s_3 - 12s_2) )\\
C_{TT}(G) =\ &2( 12s - 6t + 2s_3 + 2s_2 ) \quad C_{TT}(G) = 2( 2(6s - 3t) + 2s_3 + 2s_2)
\end{align*}
Where $s_3 = s_2 = 0$ then $s_4 = t$ and $s_0 = s$ and $C_{TT}(G) = C_{TT}$. Otherwise, $s_4 < t$ and $s_3 + s_2 > 0$, hence $C_{TT}(G) > C_{TT}$.
\end{enumerate}
When $G$ is \pNe{} topology and $\SC(G) \leq C$ the $\sigma \!\in\! V$ nodes such that $\delta(\sigma;G) = 4$ form form an exact 3-cover to $T$.
\qed
\end{proof}

\begin{proof}[theorem~\ref{thm:com}]
We transform \rxtc{} to \lsc{}. Let $\langle T, S \rangle$ be a \rxtc{} instance, we construct a $\langle M, \pi, (c_i)_{i \in V(GM)}, C \rangle$ \lsc{} instance as follows:

{\newsavebox{\tempboxx}
\begin{center}
\begin{figure}[!ht]
\sbox{\tempboxx}{
\begin{tikzpicture}

 \node at (0,2.25) {$GM$};

 \node[draw=black,circle,inner sep=0pt, minimum size=5mm] at (1,2) (s1) {\tiny{$\sigma_1$}};
 \node[draw=black,circle,inner sep=0pt, minimum size=5mm] at (5,2) (ss) {\tiny{$\sigma_s$}};

 \node[draw=black,circle,inner sep=0pt, minimum size=5mm] at (0,0) (t11) {\tiny{$\tau_{11}$}};
 \node[draw=black,circle,inner sep=0pt, minimum size=5mm] at (1,0) (t12) {\tiny{$\tau_{12}$}};
 \node[draw=black,circle,inner sep=0pt, minimum size=5mm] at (2,0) (t13) {\tiny{$\tau_{13}$}};

 \node[draw=black,circle,inner sep=0pt, minimum size=5mm] at (4,0) (ts1) {\tiny{$\tau_{s1}$}};
 \node[draw=black,circle,inner sep=0pt, minimum size=5mm] at (5,0) (ts2) {\tiny{$\tau_{s2}$}};
 \node[draw=black,circle,inner sep=0pt, minimum size=5mm] at (6,0) (ts3) {\tiny{$\tau_{s3}$}};

 \draw [->] (s1) -- node[near end, above left] {\tiny{$k'$}} (t11);
 \draw [->] (s1) -- node[near end, above left] {\tiny{$k'$}} (t12);
 \draw [->] (s1) -- node[near end, above left] {\tiny{$k'$}} (t13);

 \draw [->] (ss) -- node[near end, above left] {\tiny{$k'$}} (ts1);
 \draw [->] (ss) -- node[near end, above left] {\tiny{$k'$}} (ts2);
 \draw [->] (ss) -- node[near end, above left] {\tiny{$k'$}} (ts3);

 \draw [->] (s1) .. controls +(up:0.5cm) and +(up:0.5cm) .. node[above] {\tiny{$k$}} (ss) ;

 \draw [->] (t11) -- node[above] {\tiny{$k''$}} (t12);
 \draw [->] (t12) -- node[above] {\tiny{$k''$}} (t13);
 \draw [->] (t11) .. controls +(down:1cm) and +(down:1cm) .. node[below] {\tiny{$k''$}} (t13);

 \draw [->] (ts1) -- node[above] {\tiny{$k''$}} (ts2);
 \draw [->] (ts2) -- node[above] {\tiny{$k''$}} (ts3);
 \draw [->] (ts1) .. controls +(down:1cm) and +(down:1cm) .. node[below] {\tiny{$k''$}} (ts3);
\end{tikzpicture}
}
\subfloat{\usebox{\tempboxx}}%
\hspace{-3.8cm}
\subfloat{\vbox to \ht\tempboxx{%
  \vfil
\begin{flushleft}
{\small
\begin{align*} 
c_i    =\ & 1 \quad \forall i \!\in\! V(GM) \\
V(GM) =\ & S \cup T \\
E(GM) =\ & \{(\sigma,\sigma',k) : \sigma,\sigma'\!\in\!S,\ \sigma \neq \sigma'\} \\
          & \cup \{(\sigma,\tau,k') : \sigma\!\in\!S,\ \tau\!\in\!\sigma\} \\
          & \cup \{(\tau,\tau',k'') : \sigma\!\in\!S,\ \ \tau < \tau'\!\in\!\sigma\} \\ \\
\end{align*}
}
\end{flushleft}
  \vfil}\label{fig:complexitat:optim}}%
\end{figure}
\end{center}}

Given a possible 3-cover $S' \subseteq S$ for $\langle T, S \rangle$ we can construct the associated graph $G(S')$ where $V(G(S')) = V(GM)$ and
\[E(G(S')) = \{ (r,\sigma) : \sigma \!\in\! S \} \cup \{ (\sigma,\tau) : \sigma \!\in\! S', \tau \!\in\! \sigma\}\]
If $S'$ is an exact 3-cover then $G(S')$ is a spanning tree of $GM$. We define $C$ in such way that if $S'$ is a exact 3-cover then $C(G(S')) = C$.

Let $k > \pi$ and let $G$ be the optimal topology. The induced graph by the $S$ vertices, $G[S]$, is a $s$-complete graph: if there is a pair $\sigma,\sigma'\!\in\!S$ such that $dist(\sigma,\sigma';G) > 1$, then the graph $G' = G + \sigma\sigma'$ has social cost strictly less, $SC(G') < SC(G)$. We assign a value for $k$ enough large so that any topology $G$, if $G[S] \neq K_s$ then $SC(G) > C$. 

In $G(S')$ we have that for $\sigma\!\in\!S'$ the distance between $\sigma$ and $\tau \in \sigma$ is $1$, for $\sigma\!\notin\!S'$ this distance is $2$. If $\pi > k'$, in the optimal topology the $\tau\!\in\!T$ nodes are connected only to one $\sigma$: since $G[S] = K_s$ the $\tau$ nodes are at a distance $2$ of all $\sigma$, and there is no social benefit in adding links in order to reduces these distances. We assign a value for $k'$ enough large so that if $(\tau,\sigma)\!\in\!G$ where $\tau\!\notin\!\sigma$ then $SC(G) > C$.

Finally for the traffic between $\tau,\tau'\!\in\!\sigma$ we require that only the topologies $B = G(S')$ for an exact 3-cover $S'$ satisfies $SC(G) \leq C$.

Let $S'$ be an exact 3-cover, we denote by $C_{XY}$ the routing cost arising for the traffic sent from $X$ to $Y$, in the $G(S')$ topology.
\[
C_{SS} = k(s^2-s) \qquad C_{ST} = 6k'(2s-t) \qquad C_{TT} = 6k''(3s-t)
\]

We define the $k$, $k'$, $k''$ and $\pi$ as follows:
\begin{align*}
& \begin{array}{l l}
k = C_{ST} + C_{TT} + \pi(6t+s^2)         & k' =  C_{TT} \\
k'' = 1    & \pi = C_{ST} + C_{TT} \\
\end{array} \\
& C = C_{SS} + C_{ST} + C_{TT} + \pi(6t+s^2-s)
\end{align*}

Now we show that $\langle T,S\rangle \!\in\! \rxtc \iff \langle M, \pi, (c_i)_{i \in V(GM)}, C \rangle \!\in\! \lsc$:

If $S' \subseteq S$ is an exact 3-cover of $T$, the tree $G(S')$ has  $SC(G(S')) = C$.

If $\langle  M, \pi, (c_i)_{i \in V(GM)}, C  \rangle\!\in\!\lsc{}$ exists a graph $G$ such that $SC(G) \leq C$. $G$ is connected and satisfies:
\begin{enumerate}[label=(\roman{*})]
\item $\forall\sigma\sigma'\!\in\!S\ (\sigma,\sigma')\!\in\!E$, the traffic sent from $S$ to $S$ entails a cost $C_{ST}$.
Otherwise let's consider the cost arising for the traffic sent from $S$ to $S$, if the induced graph $G[S]$ is not the $s$-complete, we can add a link between two $\sigma,\sigma'\!\notin\!E$, so the routing cost has a strictly decrease, greater or equal than $k-\pi$. Graphs $G'$ that minimizes the traffic from $S$ to $S$ have $G'[S] = k_s$, and the routing cost from $S$ to $S$ is $k(s^2-s) = C_{SS}$. If $G[S]$ is not the $K_s$ we have $SC(G) > k(s^2-s) + k - \pi = C_{SS} + C_{ST} + C_{TT} + \pi(6t+s^2 - 1) \geq C$.

\item $|E(G)| = \frac{(s^2-s)}{2}+3t$. Otherwise or $G$ is not connected or $SC(G) > C_{SS} + \pi(6t+s^2-s) + \pi = C_{SS} + \pi(6t+s^2-s) + C_{ST} + C_{TT} = C$.

\item $\forall\tau\!\in\!T\ \exists\sigma\!\in\!S,\ \tau\!\in\!\sigma\land(\sigma,\tau)\!\in\!E$. Otherwise there is a partition $\{X,Y,Z\}$ of $T$, where
$\tau\!\in\!X \!\!\!\iff\!\!\! \exists\sigma\!\in\!S,\ \tau\!\in\!\sigma\land(\sigma,\tau)\!\in\!E$,
$\tau\!\in\!Y \!\!\!\iff\!\!\! \exists\sigma\!\in\!S,\ \tau\!\notin\!\sigma\land(\sigma,\tau)\!\in\!E$,
$\tau\!\in\!Z \!\!\!\iff\!\!\! \forall\sigma\!\in\!S,\ (\sigma,\tau)\!\notin\!E$, such that $|Y| + |Z| > 0$ (notice that the $\tau\!\in\!Z$ nodes are connected to others $\tau'\!\in\!T$).
Then $SC(G) > C_{SS} + \pi(6t+s^2-s) + 4k'(3s) - |X+Z|2k' + |Z|2k' \geq C_{SS} + \pi(6t+s^2-s) + C_{ST} + 2k' > C$.

\item $\forall (\sigma,\tau)\!\in\!E,\ \delta(\sigma;G) = 4$. Now we have that $\SC(G) - C_{TT}(G) = C - C_{TT}$ (where $C_{TT}(G)$ is the routing cost arising for the traffic between $T$ and $T$). For $C_{TT}(G) \leq C_{TT}$, let $s_i$ be the number of $\sigma \!\in\! S$ with grade $i$, we have $C_{TT}(G) = 2 (6s_4 + 8s_3 + 9s_2 + 9s_1)$ with $3s_4 + 2s_3 + s_2 = 3t$ and $s_1 + s_2 + s_3 + s_4 = s$. Manipulating expressions:
\begin{align*}
s_4 =& \ t - \frac{2s_3 + s_2}{3} \qquad s_1 = s - t + \frac{2s_3 + s_2}{3} - s_3 - s_2 \\
C_{TT}(G) =& \ 2((6t-4s_3-2s_2) + 8s_3 + 9s_2 + (9s - 9t + 6s_3 + 3s_2 - 9s_3 - 9s_2)) \\
C_{TT}(G) =& \ 2(9s-3t+s_2+s_3) \qquad C_{TT}(G) = 2(3(3s-t) + s_2 + s_1)
\end{align*}
Where $s_3 = s_2 = 0$ then $s_4 = t$, $s_0 = s$ and $C_{TT}(G) = C_{TT}$.
Where $s_4 < t$ then $s_3 + s_2 > 0$ thus $C_{TT}(G) > C_{TT}$.
\end{enumerate}
\qed
\end{proof}

\end{document}